\documentclass[10pt]{article}

\usepackage[margin=2cm]{geometry}

\parskip .75ex

\linespread{1}

\usepackage{hyperref}
\usepackage{appendix}

\usepackage{times,graphicx,epic,eepic,epsfig,amsmath,latexsym,amssymb,verbatim,revsymb}

\newcommand{\extra}[1]{}

\renewcommand{\comment}[1]{}


\usepackage{amsmath,amssymb,amsfonts}
\usepackage{amsthm}

\newtheorem{theorem}{Theorem}[section]

\newtheorem{corollary}[theorem]{Corollary}

\newtheorem{lemma}[theorem]{Lemma}

\newtheorem{proposition}[theorem]{Proposition}

\theoremstyle{remark}


\def\squareforqed{\hbox{\rlap{$\sqcap$}$\sqcup$}}
\def\qed{\ifmmode\squareforqed\else{\unskip\nobreak\hfil
\penalty50\hskip1em\null\nobreak\hfil\squareforqed
\parfillskip=0pt\finalhyphendemerits=0\endgraf}\fi}
\def\endenv{\ifmmode\;\else{\unskip\nobreak\hfil
\penalty50\hskip1em\null\nobreak\hfil\;
\parfillskip=0pt\finalhyphendemerits=0\endgraf}\fi}

\renewenvironment{proof}{\noindent \textbf{{Proof~} }}{\qed\medskip}
\newenvironment{proof+}[1]{\noindent \textbf{{Proof #1~} }}{\qed\medskip}
\newenvironment{remark}{\noindent \textit{{Remark.~}}}{\qed}

\mathchardef\ordinarycolon\mathcode`\:
\mathcode`\:=\string"8000
\def\vcentcolon{\mathrel{\mathop\ordinarycolon}}
\begingroup \catcode`\:=\active
  \lowercase{\endgroup
  \let :\vcentcolon
  }


\newcommand{\nc}{\newcommand}
\nc{\rnc}{\renewcommand}
\nc{\beq}{\begin{equation}}
\nc{\eeq}{{\end{equation}}}
\nc{\beqa}{\begin{eqnarray}}
\nc{\eeqa}{\end{eqnarray}}
\nc{\lbar}[1]{\overline{#1}}
\nc{\bra}[1]{\langle#1|}
\nc{\ket}[1]{|#1\rangle}
\nc{\ketbra}[2]{|#1\rangle\!\langle#2|}
\nc{\braket}[2]{\langle#1|#2\rangle}
\nc{\proj}[1]{| #1\rangle\!\langle #1 |}
\nc{\avg}[1]{\langle#1\rangle}
\nc{\smfrac}[2]{\mbox{$\frac{#1}{#2}$}}
\nc{\tr}{\operatorname{tr}}
\nc{\tracedist}[1]{\Delta_{}\!\left( #1 \right)}
\nc{\fid}[1]{F\!\left( #1 \right)}

\newcommand{\sket}[1]{|{#1}\rangle\rangle}
\newcommand{\sbra}[1]{\langle\langle {#1}|}

\nc{\ox}{\otimes}
\nc{\dg}{\dagger}
\nc{\dn}{\downarrow}
\nc{\cA}{{\cal A}}
\nc{\cB}{{\cal B}}
\nc{\cC}{{\cal C}}
\nc{\cD}{{\cal D}}
\nc{\cE}{{\mathcal E}}
\nc{\cF}{{\cal F}}
\nc{\cG}{{\cal G}}
\nc{\cH}{{\cal H}}
\nc{\cI}{{\cal I}}
\nc{\cJ}{{\cal J}}
\nc{\cK}{{\cal K}}
\nc{\cL}{{\cal L}}
\nc{\cM}{{\cal M}}
\nc{\cN}{{\cal N}}
\nc{\cO}{{\cal O}}
\nc{\cP}{{\cal P}}
\nc{\cR}{{\cal R}}
\nc{\cS}{{\cal S}}
\nc{\cT}{{\cal T}}
\nc{\cU}{{\cal U}}
\nc{\cV}{{\cal V}}
\nc{\cX}{{\cal X}}
\nc{\cZ}{{\cal Z}}

\nc{\entI}{{\bf I}}
\nc{\entIarrow}{{\bf I}^{\leftarrow}}
\nc{\entH}{{\bf H}}
\nc{\entS}{{\bf S}}
\nc{\entHmin}{\mathbf{H}_{\min}}
\nc{\aentHmin}{\hat{\mathbf{H}}_{\min}}

\nc{\supp}{\textrm{supp}}
 
\nc{\entF}{{\bf E}_f}

\nc{\isom}{\simeq}

\nc{\rank}{\operatorname{rank}}
\nc{\rar}{\rightarrow}
\nc{\lrar}{\longrightarrow}
\nc{\polylog}{\operatorname{polylog}}
\nc{\poly}{\operatorname{poly}}
\nc{\1}{{\openone}}

\nc{\weight}{\textbf{w}}
\nc{\hamdist}{d_{H}}

\def\e{\epsilon}

\nc{\Sp}{{{\mathbb S}}}
\nc{\RR}{{{\mathbb R}}}
\nc{\CC}{{{\mathbb C}}}
\nc{\FF}{{{\mathbb F}}}
\nc{\NN}{{{\mathbb N}}}
\nc{\ZZ}{{{\mathbb Z}}}
\nc{\PP}{{{\mathbb P}}}
\nc{\QQ}{{{\mathbb Q}}}
\nc{\UU}{{{\mathbb U}}}
\nc{\OO}{{{\mathbb O}}}
\nc{\EE}{{{\mathbb E}}}
\nc{\id}{{\operatorname{id}}}

\nc{\qubitchannel}{\id_2}
\nc{\bitchannel}{\overline{\id}_2}

\nc{\be}{\begin{equation}}
\nc{\ee}{{\end{equation}}}
\nc{\bea}{\begin{eqnarray}}
\nc{\eea}{\end{eqnarray}}
\nc{\<}{\langle}
\rnc{\>}{\rangle}
\nc{\Hom}[2]{\mbox{Hom}(\CC^{#1},\CC^{#2})}
\nc{\rU}{\mbox{U}}

\nc{\ob}[1]{#1}


\newcommand{\eqdef}	{\stackrel{\textrm{def}}{=}}

\newcommand{\ex}[1]	{\mathbf{E}\left\{ #1 \right\}}
\newcommand{\exc}[2]	{\underset{#1}{\mathbf{E}}\left\{ #2 \right\}}
\newcommand{\pr}[1]	{\mathbf{P}\left\{ #1 \right\}}

\newcommand{\event}[1]	{\left[ #1 \right]}
\newcommand{\eventfont}[1]	{\textsf{#1}}

\newcommand{\bin}	{\textrm{Bin}}

\renewcommand{\exp}[1]	{\operatorname{exp}\left( #1 \right)}

\nc{\unif}{\textrm{unif}}

\nc{\circuit}{\textrm{circ}}
\nc{\haar}{\textrm{haar}}

\pretolerance=5000


\begin{document}

\title{Scrambling speed of random quantum circuits}

\author{Winton Brown\thanks{D\'{e}partement de Physique, Universit\'{e} de Sherbrooke} \and Omar Fawzi\thanks{Institute for Theoretical Physics, ETH Z\"{u}rich}}

\date{\today}

\maketitle

\begin{abstract}
Random transformations are typically good at ``scrambling'' information.   Specifically, in the quantum setting, scrambling usually refers to the process of mapping most initial pure product states under a unitary transformation to states which are macroscopically entangled, in the sense of being close to completely mixed on most subsystems containing a fraction $fn$ of all $n$ particles for some constant $f$.  While the term scrambling is used in the context of the black hole information paradox, scrambling is related to problems involving decoupling in general, and to the question of how large isolated many-body systems reach local thermal equilibrium under their own unitary dynamics.

Here, we study the \emph{speed} at which various notions of scrambling/decoupling occur in a simplified but natural model of random two-particle interactions: random quantum circuits.  For a circuit representing the dynamics generated by a local Hamiltonian, the depth of the circuit corresponds to time. Thus, we consider the depth of these circuits and we are typically interested in what can be done in a depth that is sublinear or even logarithmic in the size of the system. We resolve an outstanding conjecture raised in the context of the black hole information paradox with respect to the depth at which a typical quantum circuit generates an entanglement assisted encoding against the erasure channel. In addition, we prove that typical quantum circuits of $\poly(\log n)$ depth satisfy a stronger notion of scrambling and can be used to encode $\alpha n$ qubits into $n$ qubits so that up to $\beta n$ errors can be corrected, for some constants $\alpha, \beta > 0$.
\end{abstract}

\section{Introduction}

Random quantum circuits of polynomial size are meant to be efficient implementations that inherent many useful properties of ``uniformly'' chosen unitary transformations which are typically very inefficient. A lot of work was done in analyzing convergence properties of the distribution defined by random quantum circuits to the Haar measure on the full unitary group acting on $n$ qubits \cite{RM, ELL05, ODP07, Znidaric2, O09, HL09, Low10, BVPRL, BHH12}. Here, instead of trying to study the convergence of these circuits to some limit, we study the information-theoretic property of interest directly. This property can be intuitively pictured as ``scrambling'' or spreading some structured or localized information over the global system of $n$ qubits. The term scrambling is used in the context of the black hole information paradox \cite{HP07, SS08, LSHOH11}, but such a property can also be understood in terms of decoupling, a central notion in the study of quantum communication \cite{HOW05, HOW06, HHYW08, Dup09, ADHW09, DBWR10}. On a more technical level, a typical property of a scrambler can be seen when we decompose the input and output states in the Pauli basis (which can be seen as a Fourier basis): a scrambler tends to reduce the mass of the low-weight Pauli operators. In fact, all of our arguments prove a statement of that form.

\subsection{Strong scrambling, quantum error correction and decoupling}
An important example of a scrambler is an encoding circuit for a quantum error correcting code.  In particular a $k$-qubit, distance $d$, non-degenerate error correcting code maps all initial states localized on $k$ qubits to states which are completely mixed on all subsystems of size less than $d$, which can be considered a strong form of scrambling when the distance is a constant fraction of $n$. Another way of defining a good quantum error correcting code is that it decouples a purification of the encoded qubits from any subsystems of size smaller than $d$. Proving coding theorems by proving a decoupling statement has been quite successful culminating in a very general decoupling theorem \cite{HOW05, HOW06, HHYW08, Dup09, ADHW09, DBWR10}. Our objective can also be seen as trying to determine how fast decoupling occurs.

We prove the following results:
\begin{itemize}
\item We give a random quantum circuit model of depth $O(\log^3 n)$ that satisfies a strong notion of scrambling.  That is, for any initial state, on average over the circuit, all subsystems of size at most $fn$ are close to completely mixed.  
\item This result can also be considered as giving decoupling unitaries that are more efficient than standard (approximate) two-designs. Relying on the fact that random quantum circuits are approximate two-designs \cite{HL09}, it was shown by \cite{SDTR11} that random circuits of size $O(n^2)$ are decouplers in a quite general setting. Here we prove that in some particular cases, we can obtain much faster decoupling with circuits of depth $O(\log^3 n)$.
\item As another application, we prove the existence of stabilizer codes with an encoding circuit of depth $O(\log^3 n)$ that have a constant encoding rate and a minimum distance that grows linearly with $n$.  
\end{itemize}

It would be interesting to prove that scrambling occurs in depth $O(\log n)$ instead. Our second set of results proves a weaker notion of scrambling in depth $O(\log n)$. This notion of scrambling is particularly relevant in the study of the black hole information paradox question.

\subsection{Black holes and the fast scrambling conjecture}
It was noted in \cite{DonPage}, that by collecting the Hawking radiation from a black hole, an arbitrary message dropped into the black hole could be recovered after half the black hole had evaporated if the dynamics of the black hole could be approximated as a random unitary transformation.  This approach was tightened significantly in \cite{HP07}, where it was shown that at any time after the black hole has evaporated past its half way point, an $m$-qubit quantum state that was dropped into the black hole could be recovered with high fidelity  from an amount of  Hawking radiation containing slightly more than $m$ qubits of quantum information, as long as the dynamics of the black hole approximates a unitary two-design sufficiently.   A random quantum circuit model analyzed in \cite{DCEL09} was invoked which could, for the purposes of recovering an initial state of constant size, scramble the degrees of freedom by a local circuit of depth $O(\log n)$.  This random quantum circuit model, though highly contrived, could be performed by two-qubit gates between nearest neighbours on a $2$-dimensional lattice, in a depth of $O(\sqrt{n}\log n)$.  
This amount of time is just enough to avoid a violation of the quantum no cloning principle assuming complementarity at the event horizon.  This motivated interest in the scrambling properties of more natural models of random quantum circuits that may better represent a naturally arising Hamiltonian. It was conjectured in \cite{SS08} that this was possible in time $O(n^{1/d})$ and $O(\log n)$ for a local Hamiltonian in $d$-dimensions and infinite dimensions respectively, with $k$-body interactions. Since the signaling bound precludes faster scrambling, such unitary transformations are referred to as ``fast scramblers".  
\begin{itemize}
\item Here we resolve the fast scrambling conjecture for random quantum circuits in the case of constant message size. We show that typical random quantum circuits on $d$-dimensional lattices and the complete graph, of depth $O(n^{1/d} \log^2 n)$ and $O(\log n)$ respectively,  scramble a message of constant size $m$ such that it may be recovered with high fidelity using only $m+c$ randomly selected qubits, for some constant $c$.  Since a straightforward lower bound of  $\Omega(n^{1/d})$ and $\Omega(\log n)$ can be shown, our results are nearly optimal.  
\end{itemize}

\subsection{Proof technique}
The first step of the proof is to relate the property of interest, which is most naturally stated in terms of the trace-norm, to the two norm, whose behavior under under the random quantum circuit can be completely described by its second-order moment operator. For the random quantum circuits we consider, this moment operator, when evaluated in the Pauli basis, can be seen as a Markov chain on the set of Pauli basis elements (also called Pauli strings). This means that the properties of interest can be seen as properties of this Markov chain.

Most previous studies of random quantum circuits bounded the convergence using the spectral gap of the moment operators. However, as we show, the spectral gap only weakly depends on the underlying interaction graph of the circuit.  This means that any result that uses the spectral gap of the second moment operator will give bounds on the scrambling time that would also apply to circuits where the gates are applied between neighbouring cells on a one dimensional line. In particular, as the diameter of the interaction graph plays a crucial role in determining the scrambling speed, it is necessary in our proofs to go beyond placing bounds on the spectral gap and to make use directly of the Markov chain (or a Markov chain obtained from lumping certain states), which heavily depends on the interaction graph.


\section{Preliminaries}

\subsection{Generalities}
The state of a pure quantum system is represented by a unit vector in a Hilbert space. Quantum systems  are denoted $A, B, C\dots$ and are identified with their corresponding Hilbert spaces. The Hilbert spaces we consider here will be mostly $n$-qubits spaces of the form $(\CC^2)^{\otimes n}$. To describe a distribution $\{p_1, \dots, p_r\}$ over quantum states $\{\ket{\psi_1}, \dots, \ket{\psi_r}\}$ (also called a mixed state), we use a density operator $\rho = \sum_{i=1}^r p_i \proj{\psi_i}$, where $\proj{\psi}$ refers to the projector on the line defined by $\ket{\psi}$. A density operator is a Hermitian positive semidefinite operator with unit trace. The density operator associated with a pure state is abbreviated by omitting the ket and bra $\psi \eqdef \proj{\psi}$. $\cS(A)$ is the set of density operators acting on $A$. The Hilbert space on which a density operator $\rho \in \cS(A)$ acts is sometimes denoted by a subscript, as in $\rho_A$. This notation is also used for pure states $\ket{\psi}_A \in A$.

In order to describe the joint state of a system $AB$, we use the tensor product Hilbert space $A \otimes B$, which is sometimes simply denoted $AB$.  If $\rho_{AB}$ describes the joint state on $AB$, the state on the system $A$ is described by the partial trace $\rho_A \eqdef \tr_B \rho_{AB}$. If $U$ is a unitary acting on $A$, and $\ket{\psi}$ a state in $A \ox B$, we sometimes use $U \ket{\psi}$ to denote the state $(U \ox \1_B) \ket{\psi}$, where the symbol $\1_B$ is reserved for the identity map on $B$. For an introduction to quantum information, we refer the reader to \cite{NC00}.

Throughout the paper, we use the Pauli basis, which is an orthogonal basis for $2 \times 2$ matrices:
\[
\sigma_0 = \1 \qquad 
\sigma_1 = \left(
\begin{array}{cc}
0 & 1 \\
1 & 0 \\
\end{array}
\right)
\qquad
\sigma_2 = \left(
\begin{array}{cc}
0 & -i \\
i & 0 \\
\end{array}
\right)
\qquad
\sigma_3 = \left(
\begin{array}{cc}
1 & 0 \\
0 & -1 \\
\end{array}
\right).
\]
For a string $\nu \in \{0,1,2,3\}^n$, we define $\sigma_{\nu} = \sigma_{\nu_1} \otimes \cdots \otimes \sigma_{\nu_n}$. The support $\supp(\nu)$ of $\nu$ is simply the subset $\{i \in [n] : \nu_i \neq 0\}$ and the weight $w(\nu) = |\supp(\nu)|$.

We now introduce some various notation. The notation $\poly(n)$ refers to a term that could be chosen to be any polynomial and the power of the polynomial can be made larger by appropriately choosing the related constants. As we are going to deal with binomial coefficients, the binary entropy function $h(x) = - x\log x - (1-x) \log(1-x)$ is going to be used. We also use the shorthand $[n] \eqdef \{1, \dots, n\}$.


\subsection{Random quantum circuits}
\label{sec:prelim-rqc}
We consider two related models for random quantum circuits. In a sequential random quantum circuit a random two-qubit gate is applied to a randomly chosen pair of qubits in each time step. For a general interaction graph, instead of choosing a pair at random from all the possible pairs, we choose a random edge in the graph. Here the random two-qubit gate is going to be a random Clifford gate or a gate uniformly chosen from the Haar measure on the unitary group acting on two qubits. In fact, as the second-order  moment operator is the same for these two models, our results apply equally well to them. However, the result of Theorem \ref{thm:good-codes} proving the existence of stabilizer codes with efficient encoding makes explicit use of the model with random Clifford gate.
%
%
%
Since we are interested in the speed at which scrambling occurs rather than the gate complexity, we ask 
into how many layers of gates can the sequence be decomposed so that no two gates act on the same qubit.  To construct the parallelized circuit, one keeps adding gates to the current level until there is a gate that shares a qubit with a previously added gate in that level, in which case you create a new level and continue.
We show that parallelizing a size $n$ random circuit results in a depth of $O(\log n)$ with high probability.

In order to avoid this overhead, we also consider a circuit model which is parallelized by construction. In this second model, a random maximum matching of on the complete graph is chosen and a random two-qubit gate is applied to qubits that are joined by an edge. We will also be interested in partially parallelized construction when in each time step a random edge is drawn from each of a set of coarse grained cells on a $d$-dimensional lattice.

A model of random circuits of a certain size defines a measure over unitary transformations on $n$ qubits that we call $p_{\circuit}$. We will sometimes compare the behaviour of the circuit to a unitary transformation chosen from the Haar measure $p_{\haar}$ over the full unitary group on $n$ qubits.

As mentioned earlier, the second-order moment operator will play an important role in all our proofs. The second-order moment operator is a super-operator acting on two copies of the space of operators acting on the ambient Hilbert space, which is an $n$-qubit space in our setting. For a measure $p$ over the unitary group, we can define the second moment operator $M$ as 
\[
M[X \otimes Y] = \exc{U \sim p}{UXU^\dagger \otimes UYU^\dagger}.
\]
%
%
%
%
%
In particular $M_{\haar} = \exc{U \sim p_{\haar}}{UXU^\dagger \otimes UYU^\dagger}$.
Any distribution for which $M = M_{\haar}$ is referred to as a two-design.  We will be using the following properties of the second moment operator.
\begin{itemize}
\item For a circuit composed of $t$ gates chosen independently, the second moment operator is $M_{\circuit} = M^t$ where $M$ is the second-order moment operator corresponding to the measure obtained when applying one gate.
\item For all the measures $p$ we consider here, the second moment operator is Hermitian. In addition, all eigenvalues of the moment operator are bounded in absolute value by $1$ and for the measures we consider, $1$ is the only eigenvalue of magnitude $1$.
\item The eigenspace $\cV$ for the eigenvalue $1$ can be shown to be the space of operators $X$ acting on $2n$ qubits such that for all unitary transformations $U\otimes U X U^\dagger \otimes U^\dagger = X$ (for the distributions $p$ we consider here). It follows that this space is the span of the identity operator and the swap operator.
The moment operator of the Haar measure, $M_{\haar}$, is the projector onto $\cV$.


\end{itemize} 

These properties imply that if $M$ is the second-order moment operator associated to the random quantum circuits we consider here, $M^t$ converges to $M_{\haar}$ as $t\rightarrow\infty$ at an asymptotic rate determined by the second largest $\lambda_2$ (in absolute value) eigenvalue of $M$. The gap of the moment operator is defined by $\Delta = 1 - \lambda_2$, and the larger the gap, the faster the moment operator converges to $M_{\haar}$. In order to study random quantum circuits when the interaction graph is a $d$-dimensional lattice, we will need a lower bound on the gap of these random quantum circuits. In order to obtain that, we proceed as in \cite{Znidaric2, BHH12} seeing the second-order moment operator as a local Hamiltonian. In fact, for a sequential random quantum circuit, we can write the second-order moment operator as follows: 
$$M_{\circuit} = \sum_{i<j} q_{ij} m_{ij},$$
where $m_{ij}=\exc{U \sim \tilde{p}} {UXU^\dagger \otimes UYU^\dagger} $ and $\tilde{p}$ is the normalized measure over gates that act only on qubits $i$ and $j$ and $q_{ij}$ is the total probability over such gates. Then, one can use a result on the gap of local  frustration free Hamiltonians \cite{Nachtergaele}. The property of being frustration free in this context simply mean that if $X$ is invariant for $M_{\circuit}$, then it is also invariant for the terms $m_{ij}$, which follows easily from the properties mentioned above.

As mentioned earlier, the Pauli basis will play an important role in our analysis. Consider a representation of the moment operator $M$ in the basis defined by $\sigma_{\nu} \otimes \sigma_{\mu}$, with $\nu, \mu \in \{0,1,2,3\}^n$. This defines a matrix $\{Q\left((\mu, \mu'), (\nu, \nu')\right)\}_{\mu, \mu', \nu, \nu'}$ of size $16^n \times 16^n$.

First, it can be shown that for the random quantum circuits we consider here, we have
\[
Q((\mu, \mu'), (\nu, \nu')) = \frac{1}{4^n} \exc{U \sim p}{ \mbox{tr}[( \sigma_\nu \otimes \sigma_{\nu'} ) (U\sigma_{\mu}U^\dagger\otimes  U\sigma_{\mu'}U^\dagger)]}=0,
\] 
unless $\nu=\nu'$ and $\mu=\mu'$. As a result, we will simply write $Q(\mu, \nu)$. Note that for any $\mu$, we have
\begin{align*}
\sum_{\nu \in \{0,1,2,3\}^n} Q(\mu, \nu) &= \sum_{\nu \in \{0,1,2,3\}^n} \frac{1}{4^n} \exc{U \sim p}{ \mbox{tr}[( \sigma_\nu \otimes \sigma_{\nu} ) (U\sigma_{\mu}U^\dagger\otimes  U\sigma_{\mu}U^\dagger)] } \\
&= \frac{1}{4^n} \exc{U \sim p}{\sum_{\nu \in \{0,1,2,3\}^n}  \mbox{tr}[\sigma_\nu U\sigma_{\mu}U^\dagger ]^2 } \\
&= \frac{1}{2^n} \exc{U \sim p}{\tr\left[ \left(U \sigma_{\mu} U^{\dagger}\right)^2 \right]} \\
&= \frac{1}{2^n} \tr[\sigma_{\mu}^2] \\
&= 1.
\end{align*}
This proves that $\{Q(\mu, \nu)\}_{\mu, \nu}$ can be seen as the transition matrix of a Markov chain on the set of Pauli strings $\{0,1,2,3\}^n$. This Markov chain is going to play an important role throughout the paper and the information theoretic properties we are interested in are going to be expressed in terms of its properties. More precisely, we are going to consider the chain obtained by removing the state $0^n$ (which is isolated from the rest of the chain).

Given a set of interacting pairs, to which Haar random (or Clifford) two-qubit gates are applied,  the Markov chain for the sequential random quantum circuit is constructed in the following way.  Consider only gates acting on qubits $i$ and $j$. If the Pauli string is  $\sigma_0 \otimes \sigma_0$ on $i$ and $j$, then there are no transitions induced as the gate acts as the identity on the string.  If the value of the string is  $\sigma_{a} \otimes \sigma_{b}$ with $(a,b) \neq (0,0)$ on qubits $i$ and $j$, then following from invariance of the Haar measure under unitary transformations, the value of the string on $i$ and $j$ transitions to each  $\sigma_{c} \otimes \sigma_{d}$ where $(c,d)$ is chosen uniformly from $\{0,1,2,3\}^2 - \{(0,0)\}$.  An average over all such transition for each interacting pair allowed by the interaction graph results in the Markov chain for this circuit. 




\section{Strong scrambling, decoupling and quantum error correction}
In this section, we prove that a random circuit of size $O(n \log^2 n)$ scrambles (on average over the choice of circuit) any initial state in the sense that all subsets of size at most $f n$ for some constant $f$ are very close to maximally mixed. In order to prove such a result, we consider the total mass of the coefficients corresponding to the Pauli strings of weight at most $fn$, and prove that it is small with very high probability. The common thing between proving strong scrambling and obtaining error correcting codes with large minimum distance is that the probability bounds should be close to optimal. More precisely, we prove that a Pauli string of weight $\ell$ is mapped by the random circuit to a Pauli string of weight at least $f n$ with probability at least roughly $1 - \frac{1}{\binom{n}{\ell}}$. The following Section \ref{sec:parallelization} then says that this circuit can with high probability be parallelized so that it has depth $O(\log^3 n)$. In the following sections, we see how we can interpret our upper bounds on the total mass on low-weight Pauli strings to prove results on decoupling and quantum error correcting codes for low-depth random quantum circuits.
\subsection{Sequential random circuit}
\begin{theorem}
\label{thm:strong-scrambling-seq}
Let $\rho(0)$ be an initial arbitrary mixed state on $n$ qubits and $\rho(t)$ be the corresponding state after the application of $t$ random two-qubit gates (the sequential circuit model). Then provided $f$ is such that $f \log 3 + h(f) - \frac{\log 3}{2} < 0$ and $t > c n \log^2 n$ (for some large enough constant $c$), we have for all subsets $S$ of size at most $fn$,
\begin{equation}
\label{eq:purity-statement}
\ex{ \tr[\rho_S(t)^2] } \leq \frac{1}{2^{|S|}} + \frac{1}{2^{|S|} \poly(n)},
\end{equation}
where the expectation is taken over the random circuit. This implies that
\begin{equation}
\label{eq:l1-statement}
\ex{ \left\| \rho_S(t) - \frac{\1}{2^{|S|}} \right\|^2_1} \leq \frac{1}{\poly(n)}. 
\end{equation}
\end{theorem}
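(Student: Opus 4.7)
The plan is to expand $\tr[\rho_S(t)^2]$ in the Pauli basis, rewrite its expectation as a hitting probability for the Pauli Markov chain of Section~\ref{sec:prelim-rqc}, bound that probability using the drift of the induced random walk on weights, and finally pass from the $L^2$ estimate to the $L^1$ estimate by Cauchy--Schwarz.

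To begin, write $\rho(t)=2^{-n}\sum_\nu\alpha_\nu(t)\sigma_\nu$ with $\alpha_\nu(t)=\tr[\sigma_\nu\rho(t)]$, $\alpha_{0^n}=1$, $|\alpha_\nu|\le 1$; orthogonality of the Paulis immediately gives
\[
\tr[\rho_S(t)^2]=\frac{1}{2^{|S|}}\Bigl(1+\sum_{\nu\neq 0^n,\ \supp(\nu)\subseteq S}\alpha_\nu(t)^2\Bigr).
\]
Combining the diagonality of the second-moment operator in the Pauli basis with the expansion $\rho(0)=2^{-n}\sum_\mu\alpha_\mu(0)\sigma_\mu$, one obtains $\ex{\alpha_\nu(t)^2}=\sum_\mu\alpha_\mu(0)^2\,Q_t(\mu,\nu)$. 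Hence (\ref{eq:purity-statement}) is equivalent to proving
\[
\sum_{\mu\neq 0^n}\alpha_\mu(0)^2\,\Pr_{Y_0=\mu}[Y_t\in T_S]\le \tfrac{1}{\poly(n)},
\]
where $T_S=\{\nu\neq 0^n:\supp(\nu)\subseteq S\}$ and $(Y_t)$ is the Pauli Markov chain.

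The main technical step is this Markov-chain tail bound. By Haar/Clifford invariance, $Y_t$ projects to a chain on non-empty supports of $[n]$, and by qubit-permutation symmetry of the sequential model it further projects to a birth--death chain on weights $W_t$ with
\[
\Pr[W\to W{+}1]=\tfrac{6W(n-W)}{5n(n-1)},\qquad \Pr[W\to W{-}1]=\tfrac{2W(W-1)}{5n(n-1)},
\]
whose drift $2W(3n+1-4W)/(5n(n-1))$ is positive on $W<(3n+1)/4$ and of order $W/n$ for small $W$. A supermartingale/Chernoff argument applied to $\log W_t$ (or a suitable Lyapunov function) should then show that after $O(n\log n)$ gates the weight exceeds $fn$ with overwhelming probability, and over the remaining $O(n\log^2 n)$ depth is exponentially unlikely ever to return below $fn$. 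Summed against $\sum_\mu\alpha_\mu(0)^2\le 2^n\tr[\rho(0)^2]\le 2^n$, this yields the required $1/\poly(n)$ bound; the entropy condition $f\log 3+h(f)<\tfrac{\log 3}{2}$ enters when balancing the number of low-weight Pauli targets in $T_S$ against the tail decay of the chain.

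Finally, (\ref{eq:l1-statement}) follows from (\ref{eq:purity-statement}) by Cauchy--Schwarz: since $\rho_S-\1/2^{|S|}$ is traceless Hermitian on a $2^{|S|}$-dimensional space, one has pointwise $\|\rho_S-\1/2^{|S|}\|_1^2\le 2^{|S|}\bigl(\tr\rho_S^2-2^{-|S|}\bigr)$, and taking expectation gives $\ex{\|\rho_S-\1/2^{|S|}\|_1^2}\le 1/\poly(n)$. The hard part will be the Markov-chain tail estimate: because the spectral gap of $Q$ is only $\Theta(1/n)$, the prescribed depth $t=O(n\log^2 n)$ is far below what generic mixing-time arguments can exploit, so the proof must use the drift of the weight chain directly and carefully combine the exponential tail with the $2^n$ factor coming from $\sum_\mu\alpha_\mu(0)^2$.
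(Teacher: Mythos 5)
Your overall architecture — Pauli expansion, reduction to a hitting-probability estimate for the weight birth–death chain, Cauchy–Schwarz at the end — matches the paper exactly, and your transition probabilities and drift formula for the weight chain are correct. The genuine gap is in the last step, where you propose to close the estimate by bounding $\sum_{\mu\neq 0}\alpha_\mu(0)^2 \leq 2^n$ and multiplying by a single, uniform, "overwhelmingly small" hitting probability. This cannot work: starting from weight $\ell=1$, the walk spends $\Theta(n)$ steps, in expectation, waiting at each low-weight state (the transition probability out of state $k$ is $\Theta(k/n)$), so within $t = O(n\log^2 n)$ steps the event $\{X_t(1) \leq fn\}$ has probability at least $\Omega(1/\poly(n))$ — it is polynomially, not exponentially, small. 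Multiplying a $1/\poly(n)$ hitting probability by a $2^n$ prefactor gives nothing. The paper's proof avoids this by stratifying by the starting weight $\ell$: one shows $\sum_{\nu : w(\nu)=\ell}\tr[\sigma_\nu\rho(0)]^2 \leq 2^\ell\binom{n}{\ell}$ (via $\tr[\rho_S(0)^2]\le 1$ and summing over $|S|=\ell$), and proves the matching weight-dependent tail bound
\[
\pr{X_t(\ell)\le fn} \leq \frac{2^{f\log 3 + h(f)}}{\binom{n}{n/2}3^{n/2}} + \frac{1}{2^\ell\binom{n}{\ell}\,\poly(n)},
\]
so each weight level contributes $O(1/\poly(n))$. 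Only the first term (comparison to the stationary mass below $fn$) is exponentially small and is where the entropy condition $f\log 3 + h(f) < \tfrac{\log 3}{2}$ enters; the second term reflects the genuinely polynomial failure probability of reaching $n/2$ from weight $\ell$ in $O(n\log^2 n)$ steps, and it must be matched against a weight-$\ell$-dependent prefactor, not against $2^n$. Your Lyapunov/martingale idea on $\log W_t$ could in principle be used to prove such a weight-dependent bound, but as stated your plan to pair a uniform tail with the crude $2^n$ mass estimate would not yield the theorem.
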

\begin{proof}
First, observe that \eqref{eq:purity-statement} easily implies \eqref{eq:l1-statement} using the Cauchy-Schwarz inequality:
\begin{align*}
\left\| \rho_S(t) - \frac{\1}{2^{|S|}} \right\|^2_1 &\leq 2^{|S|}  \left\| \rho_S(t) - \frac{\1}{2^{|S|}} \right\|^2_2 \\
&= 2^{|S|} \left( \tr[\rho_S(t)^2] - 2 \frac{\tr[ \rho_S(t) ]}{2^{|S|}} + \frac{\tr[\1]}{2^{2|S|}} \right) \\
&= 2^{|S|} \tr[ \rho_S(t)^2 ] - 1.
\end{align*}
To compute $\tr[ \rho_S(t)^2 ]$, we decompose $\rho_S(t)$ in the Pauli basis: 
\[
\rho_S(t) = \sum_{\nu \in \{0,1,2,3\}^{S} } 2^{-|S|}\tr[\sigma_{\nu} \rho(t)] \sigma_{\nu}.
\]
As a result, we have
\begin{align*}
\tr [\rho_S^2(t)] &= \sum_{\nu \in \{0,1,2,3\}^S} \frac{\tr[\sigma_{\nu} \rho_S(t) ]^2}{2^{|S|}} \\
			&= \sum_{\nu \in \{0,1,2,3\}^S} \frac{\tr[\sigma_{\nu} \otimes \1_{S^c} \tr_{S^c} [\rho(t)] ]^2}{2^{|S|}} \\
			&= \frac{1}{2^{|S|}} + \sum_{\nu \in \{0,1,2,3\}^S, \nu \neq 0} \frac{\tr[\sigma_{\nu} \otimes \1_{S^c} \rho(t) ]^2}{2^{|S|}} \\
			&\leq \frac{1}{2^{|S|}} +  \sum_{\nu : 1 \leq w(\nu) \leq |S|} \frac{\tr[\sigma_{\nu} \rho(t) ]^2}{2^{|S|}}.
\end{align*}

Our objective now is to study the evolution of the quantity $\ex{\sum_{\nu : 1 \leq w(\nu) \leq |S|} \tr[\sigma_{\nu} \rho(t) ]^2}$ as a function of $t$. As we described in the preliminaries, applying a random two-qubit gate has a simple effect on the decomposition into the Pauli basis: an identity on two qubits always gets mapped to an identity and a non-identity Pauli string on two qubits gets mapped to a uniformly chosen non-identity Pauli string (of which there are $15$). 

Our focus will be to study the Markov chain that describes the evolution of the distribution of the weight of the different levels $\sum_{\nu : w(\nu) = k} \ex{\tr[\sigma_{\nu} \rho(t)]^2}$.
More precisely, we can write for any $k \in \{1, \dots, n\}$,
\begin{align*}
\sum_{\nu: w(\nu) = k}\ex{\tr[\sigma_{\nu} \rho(t)]^2} 
&= \sum_{\nu: w(\nu) = k, \mu} \ex{\tr[\sigma_{\mu} \rho(t-1)]^2} \ex{\tr[\sigma_{\nu} U_t \sigma_{\mu} U_t^{\dagger}]^2} \\
&= P(k-1, k) \sum_{\mu: w(\mu) = k-1}\ex{\tr[\sigma_{\mu}\rho(t-1)]^2} \\
& + P(k, k) \sum_{\mu: w(\mu) = k}\ex{\tr[\sigma_{\mu} \rho(t-1)]^2} \\
& + P(k+1, k) \sum_{\mu: w(\mu) = k+1}\ex{\tr[\sigma_{\mu} \rho(t-1)]^2},
\end{align*}
where the matrix $P \in \RR^{n \times n}$ is defined by
\[
P(x,y) = \left\{
\begin{array}{cc}
1- \frac{2x(3n-2x-1)}{5n(n-1)} & \text{ if } y=x \\
\frac{2x(x-1)}{5n(n-1)} & \text{ if } y = x - 1\\
\frac{6x(n-x)}{5n(n-1)} & \text{ if } y = x + 1\\
0 & \text{ otherwise.}
\end{array} \right.
\]

We refer the reader to \cite{HL09} for more details on how to derive the parameters of this Markov chain. In fact, \cite{HL09} study the mixing time of this Markov chain. Here, we need to analyze a slightly different property: starting at some point $\ell$, what is the probability that after $t$ steps the random walk ends up in a point $\leq fn$? One can obtain bounds on this probability using the mixing time but these bounds only give something useful for our setting if $t = \Omega(n^2)$. So we will need to improve the analysis of \cite{HL09} and go directly for computing the desired probability instead of going through the mixing time. More precisely, by defining the Markov chain $\{X_s(\ell)\}_{s \geq 0}$ that starts at $\ell$ and has transition probabilities given by $P$, we have
\begin{equation}
\label{eq:pauli-to-mc}
\sum_{k=1}^{fn} \sum_{\nu: w(\nu) = k} \ex{\tr[\sigma_{\nu} \rho(t)]^2}
= \sum_{\ell=1}^{n} \sum_{\nu: w(\nu) = \ell} \tr[\sigma_{\nu} \rho(0)]^2 \pr{X_t(\ell) \leq fn}.
\end{equation}
If the initial state $\rho(0)$ is a pure product state, then one can verify that
\[
\sum_{\nu: w(\nu) = \ell} \tr[\sigma_{\nu} \rho(0)]^2 = \binom{n}{\ell}.
\]
%
%
%
%
In general, we have
\begin{align*}
\sum_{\nu: w(\nu) = \ell} \tr[\sigma_{\nu} \rho(0)]^2 &\leq \sum_{S : |S| = \ell} 2^{|S|}\tr[ \rho_S(0)^2 ] \\
&\leq 2^{\ell} \binom{n}{\ell}.
\end{align*}
\comment{Could one obtain a better bound of just $\binom{n}{\ell}$?}

The main technical result in this proof is in Theorem \ref{thm:mc-convergence} (which we defer to the appendix), where we obtain a bound on $\pr{X_t(\ell) \leq fn)} \leq \frac{2^{f \log 3 + h(f)}}{\binom{n}{n/2} 3^{n/2} } + \frac{1}{2^{\ell} \binom{n}{\ell} \poly(n)}$. Plugging this into \eqref{eq:pauli-to-mc}, we obtain
\begin{align*}
\sum_{k=1}^{fn} \sum_{\nu: w(\nu) = k} \ex{\tr[\sigma_{\nu} \rho(t)]^2}
&\leq \sum_{\ell=1}^n \sum_{\nu: w(\nu) = \ell} \ex{\tr[\sigma_{\nu} \rho(0)]^2} \cdot \frac{1}{2^{\ell} \binom{n}{\ell} \poly(n)} + \frac{2^{f \log 3 + h(f)}}{\binom{n}{n/2} 3^{n/2} } \sum_{\ell=1}^n \sum_{\nu: w(\nu) = k} \ex{\tr[\sigma_{\nu} \rho(0)]^2}\\
&\leq \sum_{\ell=1}^n 2^{\ell} \binom{n}{\ell} \cdot \frac{1}{2^{\ell} \binom{n}{\ell} \poly(n)} + 2^n \cdot \tr[\rho(0)^2] \cdot  \frac{2^{f \log 3 + h(f)}}{\binom{n}{n/2} 3^{n/2} }\\
&\leq \frac{1}{\poly(n)},
\end{align*}
provided $f$ is such that $f \log 3 + h(f) - \frac{\log 3}{2} < 0$.
\end{proof}

\subsection{Parallelizing the circuit}
\label{sec:parallelization}
Recall that we are interested in the depth of random circuits. A priori, the circuit studied in the previous section has a depth that is as large as the number of gates which is nearly linear. But in general in such a circuit there are many successive gates that are applied on disjoint qubits so they could be actually performed in parallel. More precisely, we look at the gates one by one in the order they are applied. For the purpose of this section, the gates can simply be labelled by  the two qubits the gate acts upon. To construct the parallelized circuit, one keeps adding gates to the current level until there is a gate that shares a qubit with a previously added gate in that level, in which case you create a new level and continue. In the following proposition, we prove that by parallelizing a random circuit on $n$ qubits having $t$ gates we obtain with high probability a circuit of depth $O(\frac{t}{n} \log n)$.

\begin{proposition}
\label{prop:parallelization}
Consider a random sequential circuit composed of $t$ gates where $t$ is a polynomial in $n$. Then parallelize the circuit as described above. Except with probability $1/\poly(n)$, you end up with a circuit of depth at most $O\left(\frac{t}{n} \log n\right)$.
\end{proposition}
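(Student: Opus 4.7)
The plan is to reduce the depth bound to a tail estimate for the length of the longest \emph{conflict chain} in the gate sequence, and then to control this length by a first-moment argument.

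First I would argue that the depth is at most the length of the longest conflict chain, where a conflict chain of length $L$ is a subsequence of gate indices $i_1 < i_2 < \cdots < i_L$ such that $g_{i_j}$ and $g_{i_{j+1}}$ share at least one qubit for every $j$. Indeed, if some gate is placed at level $L$, it must be blocked from level $L-1$ by an earlier gate sharing one of its qubits; iterating backwards through the levels yields such a chain ending at the witness gate.

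To bound the probability that a long chain exists I would estimate its expected number by the first moment. There are at most $\binom{t}{L} \leq (et/L)^L$ index sequences of length $L$. For a fixed sequence the gates are independent uniformly random edges, and conditioned on the two qubits of $g_{i_j}$, the number of edges sharing at least one qubit with it is $2n-3$ out of $\binom{n}{2}$, so each consecutive sharing event has probability at most $4/n$. Independence across $j$ then gives
\[
\ex{\#\{\text{conflict chains of length } L\}} \;\leq\; \binom{t}{L}\left(\frac{4}{n}\right)^{L-1} \;\leq\; \frac{n}{4}\left(\frac{4et}{Ln}\right)^{L}.
\]
Setting $L = C(t/n)\log n$ for a sufficiently large constant $C$ reduces the base to $4e/(C\log n) = o(1)$, and a short calculation (using $L \geq C\log n$ in the regime $t \geq n$ of interest) yields an expectation of at most $1/\poly(n)$. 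Markov's inequality then gives that no chain of length $L$ exists except with probability $1/\poly(n)$, and the depth bound follows from the reduction in the first step.

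The main obstacle is the combinatorial first step. The greedy scheduler, as described, can in principle push a gate to level $L$ through a less transparent sequence of blocks than a single clean chain, so one has to verify, by induction on the level of a witness gate, that at each level-transition there is genuinely a gate sitting at the previous level that shares a qubit with it. Once this reduction to the longest conflict chain is in place, the probabilistic estimate above is essentially routine.
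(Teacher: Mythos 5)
Your proposal follows essentially the same route as the paper's proof: reduce the depth bound to a tail estimate on the longest conflict chain, then control it by a first-moment union bound over subsequences. The only organizational difference is that the paper splits the $t$ gates into $4t/n$ blocks of $n/4$ gates and union-bounds over blocks, while you work directly with chains of length $L=\Theta((t/n)\log n)$ across the whole sequence; the two versions give the same conclusion, and your per-chain factor $(4/n)^{L-1}$ is in fact slightly sharper than the paper's $(2/n)^{L-1}L!$.

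The ``main obstacle'' you flag at the end is a genuine gap, and it is not closed by your sketch (nor, implicitly, by the paper). For the forward-only scheduler the paper describes -- a single level counter that only increases; a gate joins the current level unless it conflicts with a gate already there, in which case a new level is opened -- a gate sitting at level $L$ need \emph{not} conflict with anything at level $L-1$: it may simply have arrived after the counter had already advanced. For instance, with $g_1=(1,2)$, $g_2=(1,3)$, $g_3=(4,5)$, $g_4=(4,6)$, $g_5=(1,4)$ the forward scheduler produces depth $4$, while the longest conflict chain has length $3$. Worse, under this scheduler a level typically absorbs only $\Theta(\sqrt{n})$ gates before a collision (a birthday-type estimate), so $\Theta(n)$ gates would yield depth $\Theta(\sqrt{n})$ rather than $O(\log n)$, and the proposition would be false as stated. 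The reduction to conflict chains -- and hence both your argument and the paper's -- is correct if one instead assigns each gate to the \emph{lowest} level at which it has no conflict with an earlier gate: then your witness-gate induction does go through level by level, and the depth is at most the length of the longest conflict chain. So the resolution of your obstacle is to restate the parallelization in that lowest-compatible-level form before running the first-moment argument.
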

In order to prove this lemma, we use the following calculation:
\begin{lemma}
Let $G_1, \dots, G_k$ be a sequence of independent and random gates $G_i \in \binom{n}{2}$, then the probability that $G_1, \dots, G_k$ form a circuit of depth $k$ is at most $\left(\frac{2}{n} \right)^{k-1} \cdot k !$
\end{lemma}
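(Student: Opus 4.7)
The plan is to recognize the depth-$k$ event as a transparent combinatorial chain condition on consecutive gates, then control it by independence and a union bound.

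First I observe that whenever $k$ gates produce depth exactly $k$ under the greedy parallelization, each of the $k$ resulting levels must contain exactly one gate, since the $k$ gates are partitioned among $k$ nonempty levels. Consequently, at the moment $G_i$ is appended ($i\geq 2$), the current level holds only the previous gate $G_{i-1}$, so $G_i$ starts a new level if and only if it shares a qubit with $G_{i-1}$. An easy induction verifies the converse as well, so the depth-$k$ event equals
\[
\bigcap_{i=2}^{k}\bigl\{G_i \cap G_{i-1} \neq \emptyset\bigr\}.
\]

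Next I control each event in this intersection. For any fixed edge $G_{i-1}=\{a,b\}$, a union bound on the identity of the shared qubit gives
\[
\Pr\bigl[G_i \cap G_{i-1} \neq \emptyset \bigm| G_{i-1}\bigr] \leq \Pr[a \in G_i] + \Pr[b \in G_i] = \frac{2}{n} + \frac{2}{n} = \frac{4}{n},
\]
since any fixed vertex lies in exactly $n-1$ of the $\binom{n}{2}$ possible edges. This conditional bound is uniform in $G_{i-1}$, so by the chain rule together with the independence of the $G_i$'s I conclude $\Pr[\text{depth}=k]\leq (4/n)^{k-1}$.

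Finally I rewrite $(4/n)^{k-1}=2^{k-1}(2/n)^{k-1}$ and invoke the elementary inequality $2^{k-1}\leq k!$, valid for all $k\geq 1$ since $k!/2^{k-1}=\prod_{i=2}^{k}(i/2)\geq 1$, to obtain the claimed bound $(2/n)^{k-1}\cdot k!$. The only delicate step is the characterization: one must really use the fact that $k$ gates distributed over $k$ levels forces each level to be a singleton, so that ``$G_i$ creates a new level'' collapses to the clean predecessor-conflict event $G_i\cap G_{i-1}\neq\emptyset$ rather than an event depending on the full history of gates currently piled on the active level. Once that reduction is in place, everything else is a one-line union bound followed by a factorial inflation.
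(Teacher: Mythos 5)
Your proof is correct, and it takes a genuinely different route from the paper's. The paper argues by induction: it observes that the depth-$(k{+}1)$ event implies that $G_{k+1}$ intersects the \emph{union} $G_1 \cup \cdots \cup G_k$, and that this union occupies at most $k+1$ nodes when the first $k$ gates already have depth $k$; this gives a per-step conditional probability $\le 2(k+1)/n$ that grows with $k$, and the product of these growing terms is exactly what produces the $k!$ factor. You instead pin down the depth-$k$ event exactly: since $k$ gates in $k$ levels forces each level to be a singleton, $G_i$ must conflict with $G_{i-1}$ specifically, which gives the constant per-step cost $4/n$ and the tighter bound $(4/n)^{k-1}$. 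Your factorial then enters only as an afterthought, via $2^{k-1} \le k!$, to match the lemma's stated form. The tradeoffs: your argument yields a strictly stronger inequality and makes the factorial look like slack, but it relies on the sharper (and slightly more delicate) observation that the current level is always a singleton; the paper's argument is more forgiving — it only needs the coarse bound on the total number of occupied qubits — at the cost of a weaker final estimate. Both are valid and both propagate the bound by conditioning on the history, using independence of the $G_i$'s in the same way. One small remark: your union bound $\Pr[a\in G_i]+\Pr[b\in G_i]=4/n$ is in fact an overcount by the edge $\{a,b\}$ itself, so the true conditional probability is $(4n-6)/(n(n-1)) < 4/n$; this only helps you.
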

\begin{proof}
We prove this by induction on $k$. For $k = 2$, we may assume $G_1 = (1,2)$, in which case $\pr{G_2 \cap \{1,2\} \neq \emptyset} \leq 4/n$.
Now the probability that $G_1, \dots, G_{k+1}$ form a circuit of depth $k+1$ can be bounded by
\[
\pr{G_1, \dots, G_k \text{ form a circuit of depth $k$ } } \cdot \pr{G_{k+1} \cap \left( G_1 \cup \cdot \cup G_k \right) \neq \emptyset | G_1, \dots, G_k \text{ form a circuit of depth $k$ } }.
\]
Now it suffices to see that, conditioned on $\event{G_1, \dots, G_k \text{ form a circuit of depth $k$}}$, the number of nodes occupied by $G_1, \dots, G_k$ is at most $k+1$. Thus, using this fact and the induction hypothesis, we obtain
a bound of 
\[
\left(\frac{2}{n}\right)^{k-1} k! \cdot 2 \cdot \frac{k+1}{n} = \left( \frac{2}{n} \right)^{k} (k+1)! \ ,
\]
which conclude the proof.
\end{proof}

\begin{proof}[of Proposition \ref{prop:parallelization}]
Suppose we apply $m$ gates for some $m$ to be chosen later.
\begin{align*}
\pr{G_1, \dots, G_m \text{ form a circuit of depth at least $d$ } } 
&= \pr{ \exists (i_1, \dots, i_d) \in [m]^d : G_{i_1}, \cdots, G_{i_d} \text{ form a circuit of depth d} } \\ 
&\leq \binom{m}{d}  \left(\frac{2}{n} \right)^{d-1} \cdot d ! \\
&\leq m^d \cdot \left(\frac{2}{n} \right)^{d-1}.
\end{align*}
Now we can fix $m = n/4$ and $d = c \log n + 1$ for some constant $c$ to be chosen depending on the desired probability bound, then we have
\begin{align*}
\pr{G_1, \dots, G_m \text{ form a circuit of depth at least $d$ } } 
&\leq m \cdot \left(\frac{2m}{n}\right)^{d-1} \leq n^{-c+1}.
\end{align*}
This proves that every set of $n/4$ gates generate a circuit of depth at most $c\log n + 1$ with probability at least $1-1/n^{-c+1}$, and so if we have $4t/n$ such sets, we get depth at most $4t/n(c \log n + 1)$ with probability at least $1-4t/n^{c}$.
\end{proof}

The next corollary follows directly from Theorem \ref{thm:strong-scrambling-seq} and Proposition \ref{prop:parallelization}.
\begin{corollary}
In the parallelized random quantum circuit model with depth $O(\log^3 n)$, we have 
\begin{equation}
\ex{ \left\| \rho_S(t) - \frac{\1}{2^{|S|}} \right\|^2_1} \leq \frac{1}{\poly(n)}
\end{equation}
for all subsets $S$ of size at most $fn$ with $f$ such that $f \log 3 + h(f) - \frac{\log 3}{2} < 0$.
\end{corollary}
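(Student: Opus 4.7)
The plan is to concatenate Theorem \ref{thm:strong-scrambling-seq} with Proposition \ref{prop:parallelization}. The essential observation is that parallelization of a sequential circuit is a lossless rearrangement of commuting two-qubit gates into layers: it produces exactly the same unitary on $n$ qubits, only organized into fewer time-slices. Consequently the distribution over $U(2^n)$ induced by the sequential model equals the distribution induced by its parallelized version, and every expectation over the unitary (in particular $\ex{\|\rho_S(t) - \1/2^{|S|}\|_1^2}$) is preserved verbatim.

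First I would instantiate Theorem \ref{thm:strong-scrambling-seq} with $t = cn\log^2 n$, choosing $c$ large enough that the theorem applies and yields, for every fixed $S$ with $|S|\le fn$,
\[
\ex{\|\rho_S(t) - \1/2^{|S|}\|_1^2} \leq 1/\poly(n).
\]
Since $t$ is polynomial in $n$, Proposition \ref{prop:parallelization} is available for this same $t$, giving that the parallelized circuit has depth at most $O\!\left((t/n)\log n\right) = O(\log^3 n)$ except on an event whose probability is $1/\poly(n)$. Denote this good event by $\cE$, so $\pr{\cE} \geq 1 - 1/\poly(n)$.

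Finally I would read off the corollary from these two facts. Interpreting ``the parallelized random quantum circuit model with depth $O(\log^3 n)$'' as the conditional distribution given $\cE$ (the event that parallelization succeeds within depth $D = O(\log^3 n)$), nonnegativity of the random variable yields
\[
\ex{\|\rho_S(t) - \1/2^{|S|}\|_1^2 \mid \cE} \leq \frac{\ex{\|\rho_S(t) - \1/2^{|S|}\|_1^2}}{\pr{\cE}} \leq \frac{1/\poly(n)}{1-1/\poly(n)} = 1/\poly(n),
\]
which is the claim. (Alternatively, since the unitary distribution is unchanged by parallelization, one may simply read the statement unconditionally for the sequential model and the bound is immediate with no division needed.) There is no substantive obstacle here, since all of the work has already been done in Theorem \ref{thm:strong-scrambling-seq} and Proposition \ref{prop:parallelization}; the only mild subtlety is the interpretation of the conditional model, handled by the one-line Markov-style argument above.
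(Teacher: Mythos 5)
Your proposal is correct and follows essentially the same route the paper intends: instantiate Theorem~\ref{thm:strong-scrambling-seq} at $t = cn\log^2 n$, invoke Proposition~\ref{prop:parallelization} to conclude that the parallelized version of that same sequence of gates has depth $O(\log^3 n)$ except with probability $1/\poly(n)$, and note that parallelization preserves the unitary and hence the expectation. The paper simply states that the corollary ``follows directly'' from these two results; your extra Markov-style step conditioning on the good event $\cE$ is a reasonable way to make precise the informal phrase ``parallelized model with depth $O(\log^3 n)$,'' and the alternative unconditional reading you give is equally valid.
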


\subsection{Decoupling and quantum error correcting codes}

Scrambling is related to the notion of decoupling. The idea of decoupling plays an important role in quantum information theory and many coding theorems amount to proving a decoupling theorem \cite{HOW05, HOW06, HHYW08, ADHW09, Dup09, DBWR10}.

Consider the setting described in Figure \ref{fig:decoupling}. Let $\ket{\Phi}_{MM'}$ and $\ket{\psi}_{AA'}$ be pure states on $MM'$ and $AA'$ respectively. Then apply some unitary transformation to the system $M'A'$ (which for us is going to be a random quantum circuit) and map it to a system that we call $B$. Let us denote by $\ket{\rho}_{BMA}$ the output state. Assume now that the reduced state $\rho_{MS}$ on $M$ together with some subset $S$ of the qubits of $B$ is a product state: $\rho_{MS} = \rho_M \otimes \rho_S$ (the subsystem $S$ is \emph{decoupled} from the reference $M$). Then by Uhlmann's theorem (or the unitary equivalence of purifications), there exists an isometry acting on $AS^c$ that recovers a purification of the system $M$. If for example $\ket{\Phi}_{MM'}$ is a maximally entangled state, then the previous argument shows that if we input quantum information into the $M'$ system, it can be recovered from the systems $AS^c$ alone with no need for the system $S$.

	\begin{figure}[h]
	\begin{center}
		\includegraphics[scale=0.8]{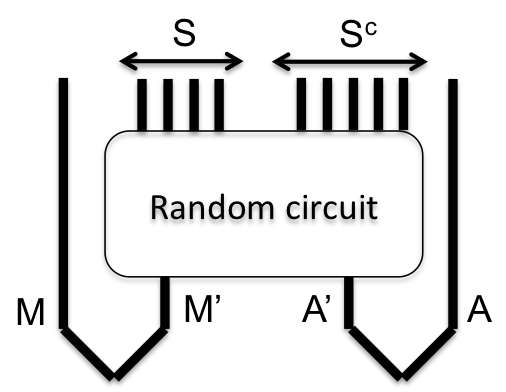}
		\caption{Illustration of decoupling for random quantum circuits}
		\label{fig:decoupling}
	\end{center}
	\end{figure}

In the following for simplicity, we focus on the case
\[
\Phi_{MM'} = \frac{1}{2^{2m}} \sum_{\nu \in \{0,1,2,3\}^m} \sigma_{\nu} \otimes \sigma_{\nu} ,
\]
where the systems $M$ and $M'$ consist of $m$ qubits. For the $AA'$ system, we will focus on two important cases: First where $A'$ is already in a pure state $\ket{\psi}_{A'} = \ket{0}_{A'}$, so that it can be written in the Pauli basis as
\begin{equation}
\label{eq:psi-0}
\psi_{A'} = \frac{1}{2^{n-m}} \sum_{\nu \{0, 3\}^{n-m}} \sigma_{\nu} \ ,
\end{equation}
and second the case where $\psi_{AA'}$ is maximally entangled so that
\begin{equation}
\label{eq:psi-max-entangled}
\psi_{AA'}  = \frac{1}{2^{2(n-m)}} \sum_{\nu \in \{0,1,2,3\}^{n-m}} \sigma_{\nu} \otimes \sigma_{\nu} ,
\end{equation}
which corresponds to entanglement assisted communication. Of course, one could obtain a statement for general states and this would depend on the decomposition of the states $\Phi_{MM'}$ and $\psi_{AA'}$ in the basis of Pauli strings and more precisely on the weight distribution of this decomposition.

A decoupling statement is very similar in spirit to scrambling and the analysis is almost the same except that we use a specific state for the input. Using Theorem \ref{thm:mc-convergence} which was the main technical ingredient in the proof of Theorem \ref{thm:strong-scrambling-seq}, we can get the following decoupling and coding results.

\begin{theorem}
\label{thm:decoupling}
In the setting of equation \eqref{eq:psi-0}, we have if $m < \beta n$ and any $S$ of size $|S| \leq f n$ with $\beta < 1/9$ and $\beta < \log3/2 - f \log 3 -  h(f)$, and $\rho$ is the state obtained after applying parallelized random quantum circuit of depth $O(\log^3 n)$, we have
\[
\left\| \rho_{MS} - \frac{\1}{2^m} \otimes \frac{\1}{2^{|S|}} \right\|_1 \leq \frac{1}{\poly(n)}
\]
with probability $1-1/\poly(n)$ over the choice of the circuit.

In the setting of equation \eqref{eq:psi-max-entangled} (entanglement assisted coding), we have for $\beta < 2/3$ and $\beta < \frac{1 + \log3/2 - f \log 3 -  h(f)}{2}$,
\[
\left\| \rho_{MS} - \frac{\1}{2^m} \otimes \frac{\1}{2^{|S|}} \right\|_1 \leq \frac{1}{\poly(n)}
\]
with probability $1-1/\poly(n)$ over the choice of the circuit.
\end{theorem}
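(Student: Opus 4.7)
The plan is to transplant the proof of Theorem~\ref{thm:strong-scrambling-seq} into this more structured setting, exploiting the explicit Pauli decomposition of the initial state $\rho(0) = \Phi_{MM'} \otimes \psi_{A'}$ (case~1) or $\Phi_{MM'} \otimes \psi_{AA'}$ (case~2). Cauchy--Schwarz as in that proof reduces the trace-norm bound to
\[
\left\| \rho_{MS} - \frac{\1}{2^{m+|S|}} \right\|_1^2 \leq \sum_{(\mu,\nu) \neq (0,0)} \tr\!\left[ \left(\sigma_\mu^M \otimes \sigma_\nu^S \otimes \1_{S^c}^B\right) \rho_{MB} \right]^2,
\]
so it suffices to bound the expectation of the right-hand side by $1/\poly(n)$; a Markov inequality then upgrades this to the high-probability statement, and Proposition~\ref{prop:parallelization} compresses the $O(n\log^2 n)$ sequential gates into depth $O(\log^3 n)$.

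To evaluate a typical term, I would expand $U^\dagger(\sigma_\nu \otimes \1_{S^c}) U = \sum_\gamma c_\gamma(U,\tilde\nu)\, \sigma_\gamma^B$ and contract against $\rho(0)$. The identity $\tr[(\sigma_\mu \otimes \sigma_{\gamma_{M'}})\Phi_{MM'}] = \delta_{\mu,\gamma_{M'}}$ together with $\tr[\sigma_{\gamma_{A'}}\psi_{A'}] = \mathbf{1}[\gamma_{A'} \in \{0,3\}^{n-m}]$ (case~1) or $\tr[(\sigma_{\gamma_{A'}} \otimes \1_A)\psi_{AA'}] = \delta_{\gamma_{A'},0}$ (case~2) restricts the contributing $\gamma$ to a specific subset $\Gamma_\mu$ of ``source'' Paulis. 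Squaring, taking expectation, and invoking $\EE[c_\gamma c_{\gamma'}] = \delta_{\gamma\gamma'}\, Q_t(\gamma,\tilde\nu)$---the $\nu = \nu' = \tilde\nu$ specialization of the vanishing relation $Q((\mu,\mu'),(\nu,\nu')) = 0$ unless $\mu = \mu'$ and $\nu = \nu'$ from Section~\ref{sec:prelim-rqc}---the cross terms disappear. Summing over target Paulis supported in $S$ and using the symmetry of $Q$ then yields
\[
\EE\!\left[\sum_{(\mu,\nu) \neq (0,0)} \tr[\cdots]^2\right] \leq \sum_{\gamma \in \Gamma} \Pr\!\left[w(X_t(\gamma)) \leq fn\right],
\]
where $\Gamma = \bigcup_\mu \Gamma_\mu$ has size $2^{n+m}$ in case~1 and $4^m$ in case~2.

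To close, I apply Theorem~\ref{thm:mc-convergence}. Its ``stationary'' term contributes $|\Gamma| \cdot 2^{n(f\log 3 + h(f))} /(\binom{n}{n/2}\, 3^{n/2})$, which is $1/\poly(n)$ precisely when $\beta < \log 3/2 - f\log 3 - h(f)$ in case~1 and when $\beta < (1 + \log 3/2 - f\log 3 - h(f))/2$ in case~2 (the extra factor of two reflecting $|\Gamma| = 4^m = 2^{2\beta n}$). Its ``transient'' term requires bounding $\sum_\ell N_\ell /(2^\ell \binom{n}{\ell})$, where $N_\ell = \#\{\gamma \in \Gamma : w(\gamma) = \ell\}$. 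In case~2 one has $N_\ell = \binom{m}{\ell} 3^\ell$, and $\binom{m}{\ell}/\binom{n}{\ell} \leq \beta^\ell$ turns the sum into a geometric series in $3\beta/2$ that converges for $\beta < 2/3$. In case~1 one has $N_\ell = \sum_{\ell_1+\ell_2=\ell}\binom{m}{\ell_1} 3^{\ell_1}\binom{n-m}{\ell_2} = \binom{n}{\ell}\,\EE[3^{H_\ell}]$ for a hypergeometric $H_\ell$, and a Hoeffding-style moment estimate $\EE[3^{H_\ell}] \leq (1+2\beta)^\ell$ turns the sum into a geometric series in $(1+2\beta)/2$; a careful tracking of this sum together with the polynomial slack available from the depth $O(\log^3 n)$ construction is what produces the stated threshold $\beta < 1/9$.

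The main obstacle I anticipate is the transient-term analysis in case~1: the count $N_\ell$ is much richer than in case~2, and recovering the stated constant $1/9$ will likely require either a sharper hypergeometric moment estimate than Hoeffding's or a more delicate splitting of the $\ell$-sum into small, moderate, and large regimes, combined with careful attention to how the polynomial factor in Theorem~\ref{thm:mc-convergence} scales with circuit depth. A secondary subtlety is verifying that the orthogonality $\EE[c_\gamma c_{\gamma'}] = 0$ genuinely applies when both coefficients share the same target Pauli $\tilde\nu$; this is exactly the regime in which the $Q$-diagonality from Section~\ref{sec:prelim-rqc} operates, with $\mu = \gamma$, $\mu' = \gamma'$ and $\nu = \nu' = \tilde\nu$, so the argument goes through.
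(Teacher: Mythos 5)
Your overall structure is the same as the paper's: Cauchy--Schwarz to pass from trace to two-norm, Pauli decomposition, orthogonality of the Pauli-to-Pauli transition coefficients (your $\EE[c_\gamma c_{\gamma'}]=\delta_{\gamma\gamma'}Q_t(\gamma,\tilde\nu)$ is indeed the $\mu=\mu'$, $\nu=\nu'$ diagonality used in the paper), reduction to the weight Markov chain, and then Theorem~\ref{thm:mc-convergence} for the final probability bound. Case~2 and both stationary-term thresholds are identical to the paper's calculations.

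Where you genuinely diverge is the ``transient'' combinatorial sum in case~1, and your route is actually cleaner and \emph{sharper}. The paper bounds $N_\ell = \sum_p \binom{m}{p}3^p\binom{n-m}{\ell-p}$ by splitting into $p \leq \ell/2$ (where $3^{\ell/2} \leq 2^\ell$ plus Vandermonde gives $2^\ell\binom{n}{\ell}$) and $p > \ell/2$ (where a chain of elementary inequalities gives $3^\ell 2^\ell (m/n)^{\ell/2}\binom{n}{\ell}$, requiring $m/n \leq 1/9$ to kill the $3^\ell$), ending up with $N_\ell \leq \ell\, 2^\ell \binom{n}{\ell}$. Your hypergeometric identity $N_\ell = \binom{n}{\ell}\,\EE[3^{H_\ell}]$ together with Hoeffding's convexity comparison $\EE[3^{H_\ell}]\leq (1+2m/n)^\ell$ is a one-line replacement for that whole calculation. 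Note, however, that it does \emph{not} ``produce the stated threshold $\beta < 1/9$'': the geometric ratio $(1+2\beta)/2 < 1$ only requires $\beta < 1/2$, so your estimate is strictly better than the paper's and the theorem as stated (with $\beta < 1/9$) follows a fortiori. The $1/9$ in the theorem statement is simply slack from the paper's cruder splitting; there is nothing further for you to chase, and the ``main obstacle'' you flag at the end is illusory. The rest of your plan (Markov inequality to get the high-probability statement, Proposition~\ref{prop:parallelization} to get depth $O(\log^3 n)$) is exactly what the paper does.
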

\begin{remark}
The rates we obtain here are not optimal, but we prove that it is possible to code at constant rate with a constant fraction of errors using a circuit of polylogarithmic depth. It would be interesting to determine whether it is possible to achieve the capacity of the erasure channel using such shallow circuits.
For example, it would be interesting to improve the bound in the entanglement assisted case to $\beta < \frac{2 - f \log 3 -  h(f)}{2}$. This is the bound one would get for a random unitary distributed according to the Haar measure on the unitary group acting on $n$ qubits and is reminiscent of the entanglement assisted capacity of the depolarizing channel.
\end{remark}
\begin{proof}
We start with the entanglement assisted case, for which the calculation is a bit simpler. We apply a random quantum circuit to the system $M'A'$. We can write the initial state on $MM'A'$ as
\[
\rho(0) = \frac{1}{2^{2m}} \sum_{\nu \in \{0,1,2,3\}^m} \sigma_{\nu} \otimes \sigma_{\nu} \otimes \frac{\1}{2^{n-m}}.
\]
We study the evolution of the state $\rho(t)$ when we apply $t$ random gates, more precisely we study the behaviour of the reduced state when a subset size $k = (1-f) n$ qubits are discarded (from the $M'A'$ system), the objective is to show that the remaining state is close to maximally mixed. We have
\begin{align*}
\ex{\left\| \rho(t)_{MS} - \frac{\1}{2^{m + fn}} \right\|^2_1} 
&\leq \sum_{ \nu \in \{0,1,2,3\}^m, \mu \in \{0,1,2,3\}^{S}, (\nu, \mu) \neq 0} \ex{\tr[ \sigma_{\nu} \otimes \sigma_{\mu} \otimes \1 \rho(t) ]^2} \\
&\leq \sum_{ \nu \in \{0,1,2,3\}^m, w(\mu) \leq fn, (\nu, \mu) \neq 0 } \ex{\tr[ \sigma_{\nu} \otimes \sigma_{\mu} \rho(t) ]^2} \\
&= \sum_{ \nu \in \{0,1,2,3\}^m, \nu \neq 0 } \tr[\sigma_{\nu} \otimes \sigma_{\nu} \otimes \1 \rho(0)]^2 \pr{X_t(w(\nu)) \leq fn} \\
&= \sum_{ \ell =  1}^{m} \binom{m}{\ell} 3^{\ell} \pr{X_t(\ell) \leq fn},
\end{align*}
where we used the same notation as in the proof of Theorem \ref{thm:strong-scrambling-seq}: $X_t(\ell)$ is the random variable denoting the weight of the Pauli string obtained after applying $t$ random gates to the operator $\sigma_{\mu}$ for some $\mu$ of weight $\ell$.

If $t > c n \log^2 n$, we can apply Theorem \ref{thm:mc-convergence}, and obtain
\begin{align*}
\ex{\left\| \rho(t)_{MS} - \frac{\1}{2^{m + fn}} \right\|^2_1}
&\leq  \frac{1}{\poly(n)} \cdot \sum_{ \ell =  1}^{m} \binom{m}{\ell} 3^{\ell} \frac{1}{2^{\ell}\binom{n}{\ell}} + (4^m - 1) \cdot \frac{2^{f \log 3 + h(f)}}{\binom{n}{n/2} 3^{n/2}} \\
&= \frac{1}{\poly(n)} \cdot \sum_{ \ell =  1}^{m} \frac{m (m-1) \cdots (m-\ell+1)}{n(n-1) \cdots (n-\ell+1)} \left(\frac{3}{2}\right)^{\ell} + (4^m - 1) \cdot \frac{2^{f \log 3 + h(f)}}{\binom{n}{n/2} 3^{n/2}}.
\end{align*}
This means that provided $m$ and $f$ are small enough, most of the random circuits with $t = O(n \log^2 n)$ gates are good encoders that allow the (approximate) correction of any $fn$ erasure when entanglement assistance is available. In other words, if we write $m=\beta n$, then as long as $\beta < 2/3$ and $2\beta + f \log 3 + h(f) - 1 - \log 3/2 < 0$, the reference system $M$ is decoupled from any subset of at most $fn$ qubits of the output.

We now move to the case where the state $\ket{\psi}_{A'}$ is pure. In this case the initial state on $MM'A'$ can be written as
\[
\rho(0) = \frac{1}{2^{2m}} \cdot \frac{1}{2^{n-m}} \sum_{\nu \in \{0,1,2,3\}^m, \mu \in \{0,3\}^{n-m}} \sigma_{\nu} \otimes \sigma_{\nu} \otimes \sigma_{\mu}. 
\]
Then, the analysis is the same
\begin{align}
\ex{\left\| \rho(t)_{MS} - \frac{\1}{2^{m + fn}} \right\|^2_1} 
&= \sum_{ \nu \in \{0,1,2,3\}^m, \mu \in \{0,3\}^{n-m}, , (\nu, \mu) \neq 0 } \tr[\sigma_{\nu} \otimes \sigma_{\mu} \rho(0)_{M'A'}]^2 \pr{X_t(w(\nu)+w(\mu)) \leq fn} \notag \\
&= \sum_{\ell=1}^n \left( \sum_{p=0}^{\min(\ell,m)} \binom{m}{p} 3^{\ell} \binom{n-m}{\ell - p} \right) \pr{X_t(\ell) \leq fn} \\
&\leq \sum_{\ell=1}^n \left( \sum_{p=0}^{\min(\ell,m)} \binom{m}{p} 3^{\ell} \binom{n-m}{\ell - p} \right) \frac{1}{\binom{n}{\ell} 2^{\ell} \poly(n)} + 2^{m+n} 
\frac{2^{f \log 3 + h(f)}}{\binom{n}{n/2} 3^{n/2}}.
 \label{eq:bound-ancilla}
\end{align}
If $m = \beta n$, the second term vanishes provided $\beta + f \log 3 + h(f) - \log3/2 < 0$. For the first term, we need to analyze carefully the number of Pauli strings $\sigma_{\nu} \otimes \sigma_{\mu}$ with a given weight $\ell$ which does not have an expression that is as simple as in the entanglement assisted case. Our objective is to prove that
\begin{align*}
\sum_{p=0}^{\min(\ell,m)} \binom{m}{p} 3^{\ell} \binom{n-m}{\ell - p} \leq \ell \binom{n}{\ell} 2^{\ell}.
\end{align*}
using the fact that $m$ is not too large, so that we get a vanishing bound on the trace distance. We bound the terms for $p \leq \ell/2$ and $p \geq \ell/2$ separately. We have
\begin{align*}
\sum_{p=0}^{\ell/2} \binom{m}{p} 3^{\ell} \binom{n-m}{\ell - p}
&\leq 3^{\ell/2} \sum_{p=0}^{\ell/2} \binom{m}{p} \binom{n-m}{\ell - p} \\
&\leq 2^{\ell} \binom{n}{\ell}.
\end{align*}
For $p \geq k/2$, this needs a bit more work. We have
\begin{align*}
\binom{m}{p} \binom{n-m}{\ell-p} = \frac{m (m-1) \cdots (m-p+1) \cdot (n-m) (n-m-1) \cdots (n-m-(\ell-p)-1)}{p! (\ell-p)!}.
\end{align*}
First we have $p! (\ell-p)! \geq ((\ell/2)!)^2 \geq \frac{\ell!}{2^{\ell}}$. We also have
\[
m(m-1) \cdot (m-p+1) \leq \left(\frac{m}{n} \right)^p n (n-1) \cdots (n-p+1).
\]
Moreover, as $p \leq m$, we have
\[
(n-m) \cdots (n - m - \ell + p + 1) \leq (n-p) \cdots (n - p - \ell + p + 1). 
\]
As a result, for $p \geq \ell/2$,
\[
3^p \binom{m}{p} \binom{n-m}{\ell-p} \leq 3^\ell \cdot 2^\ell \left(\frac{m}{n}\right)^{\ell/2} \cdot \binom{n}{\ell}.
\]
By choosing $m/n \leq 1/9$, we can now bound the number of Pauli strings of weight $\ell$:
\begin{align*}
\sum_{p=0}^{\min(\ell,m)} \binom{m}{p} 3^{\ell} \binom{n-m}{\ell - p}
\leq \ell 2^{\ell} \binom{n}{\ell}.
\end{align*}
Returning to \eqref{eq:bound-ancilla}, provided $m/n < 1/9$ we get
\begin{align*}
\ex{\left\| \rho(t)_{MS} - \frac{\1}{2^{m + fn}} \right\|^2_1} 
&\leq \sum_{\ell=1}^n \frac{\ell}{\poly(n)} + 2^{n+m} \cdot \frac{2^{f \log 3 + h(f)}}{\binom{n}{n/2} 3^{n/2}}.
\end{align*}
We can then obtain the claimed results by parallelizing these sequential circuits (Proposition \ref{prop:parallelization}).
\end{proof}
Another way of interpreting the analysis above is that random quantum circuits define good stabilizer codes, i.e., codes with a positive rate and linear distance. Such a result can be seen as a step towards understanding the complexity of encoding into good quantum error correcting codes. There are many results on various classical versions of this problem; see e.g., \cite{GHKPV12} for a recent result in this spirit.

\begin{theorem}[Good codes from low-depth circuits]
\label{thm:good-codes}
There exists non-degenerate stabilizer codes with encoding circuits of depth $O(\log^3 n)$ encoding $\beta n$ qubits into $n$ qubits and having a minimum distance $\alpha n$ for some constants $\alpha, \beta > 0$.
\end{theorem}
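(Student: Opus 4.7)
The plan is to take the encoder to be a random sequential Clifford circuit $U$ of $t = O(n\log^2 n)$ two-qubit gates, applied to $\beta n$ logical qubits together with $(1-\beta)n$ ancilla qubits initialized to $\ket{0}$. The resulting stabilizer group is $U\,\mathcal{S}_0\, U^\dagger$ with $\mathcal{S}_0 = \langle Z_i : i \text{ is an ancilla qubit}\rangle$, and its normalizer in the Pauli group is $U\,\mathcal{N}_0\, U^\dagger$, where $\mathcal{N}_0$ consists of all Paulis acting as $I$ or $Z$ on each ancilla qubit and arbitrarily on the logical register. The code has minimum distance at least $\alpha n$ and is non-degenerate exactly when $w(UQU^\dagger) \geq \alpha n$ for every non-identity $Q \in \mathcal{N}_0$; the weight is well defined because Cliffords map Paulis to Paulis up to sign.

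A union bound then gives
\[
\pr{\text{code fails}} \leq \sum_{\ell=1}^{n} |\mathcal{N}_0(\ell)|\cdot \pr{X_t(\ell) \leq \alpha n},
\]
where $X_t$ is the Pauli-weight Markov chain from the preliminaries and $|\mathcal{N}_0(\ell)| = \sum_{p=0}^{\min(\ell,\beta n)} 3^p \binom{\beta n}{p}\binom{(1-\beta)n}{\ell-p}$. Using $3^p \leq 3^\ell$, this count is majorized by the expression already handled in the pure-ancilla case of Theorem~\ref{thm:decoupling}, and the combinatorial argument there gives $|\mathcal{N}_0(\ell)| \leq \ell\, 2^\ell \binom{n}{\ell}$ whenever $\beta \leq 1/9$. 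Substituting this together with the bound on $\pr{X_t(\ell)\leq \alpha n}$ supplied by Theorem~\ref{thm:mc-convergence}, using $\sum_\ell 2^\ell\binom{n}{\ell} = 3^n$ and $\binom{n}{n/2}=\Theta(2^n/\sqrt n)$, the right-hand side reduces to a $1/\poly(n)$ term plus a contribution of order $n^{O(1)}\cdot 2^{n(\alpha\log 3 + h(\alpha) + \log 3/2 - 1)}$.

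For any $\beta \in (0,1/9)$ and any constant $\alpha > 0$ small enough that $\alpha\log 3 + h(\alpha) < 1 - \log 3/2$, the failure probability vanishes (such $\alpha$ exists since the left-hand side is zero at $\alpha = 0$), so a good sequential random Clifford circuit exists; parallelizing it via Proposition~\ref{prop:parallelization} yields an encoder of depth $O(\log^3 n)$, which establishes the theorem (the $1/\poly(n)$ failure probability of the parallelization is absorbed by the above estimate for an appropriate polynomial). The main obstacle is the same one encountered in the pure-ancilla decoupling proof: the $3^p$ factor in $|\mathcal{N}_0(\ell)|$, if naively upper bounded by $3^\ell$, would swamp the Markov chain's concentration near its stationary distribution. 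The restriction $\beta \leq 1/9$ together with the split of the combinatorial sum at $p = \ell/2$ supplies a saving factor of $\beta^{\ell/2} \leq 3^{-\ell}$ in the large-$p$ regime that exactly cancels the $3^\ell$, after which Theorem~\ref{thm:mc-convergence} closes the argument.
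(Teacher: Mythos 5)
Your proposal is correct and follows essentially the same route as the paper, which proves the theorem by observing that a random Clifford circuit applied to $m$ logical qubits and $n-m$ ancillas in $\ket{0}$ defines a stabilizer code whose normalizer is the image of $\{\sigma_\nu\otimes\sigma_\mu:\nu\in\{0,1,2,3\}^m,\mu\in\{0,3\}^{n-m}\}$, so that the weight and counting analysis already done in the pure-ancilla case of Theorem~\ref{thm:decoupling} gives the distance bound. You have simply unpacked the stabilizer/normalizer bookkeeping, the union bound over $\mathcal{N}_0$, and the substitution of Theorem~\ref{thm:mc-convergence} that the paper leaves implicit; the only cosmetic difference is that you bound $\sum_\ell|\mathcal{N}_0(\ell)|$ by $\sum_\ell 2^\ell\binom{n}{\ell}=3^n$ whereas the paper uses the exact count $2^{m+n}$, yielding a slightly different (but equally valid) constraint relating $\alpha$ and $\beta$.
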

\begin{proof}

It is not hard to see that a circuit composed of Clifford gates that maps all Pauli strings of the form $\sigma_{\nu} \otimes \sigma_{\mu}$ with $\nu \in \{0,1,2,3\}^m$ and $\mu \in \{0,3\}^{n-m}$ into Pauli strings of weight at least $fn$ defines a stabilizer code encoding $m$ qubits and having distance $fn$. That is exactly what the analysis in the proof of Theorem \ref{thm:decoupling} shows.
\end{proof}



The results in this section involve random quantum circuits of depth $O(\log^3 n)$. It would be interesting to improve these scrambling times to $O(\log n)$ instead. Our second set of results presented in the following section proves a weaker notion of scrambling in depth $O(\log n)$. This notion of scrambling is particularly relevant in the study of the black hole information paradox question. In addition, we also consider this notion of scrambling when the interaction graph is a $d$-dimensional lattice. 

\section{Scrambling and the black hole information paradox}
Here we show that there are natural random quantum circuit models that perform good entanglement assisted codes for the erasure channel.   Note that this is a weaker notion of scrambling since we require that only a constant number of initial low weight Pauli strings are brought to linear weight strings with high probability.   In this section we will consider a different model of random quantum circuit where gates are selected from among sets of matchings between neighbors on lattices of fixed dimension and from the complete graph.  For the $d$-dimensional models, to aid in our proofs we introduce an additional set of coarse grained blocks of size $O(\log n)$ between which disallow gates to be performed for coarse time steps of $O(\log^2 n)$. We show that a constant size message for typical random quantum circuits of depth $n^{1/d} \log^2 n$ and $\log n$ for random quantum circuits with gates that act on a bounded number of qubits between neighbors on $d$-dimensional lattices and two-qubits gates on the complete (infinite dimensional) graph.  For quantum circuits consisting of gates of fixed weight a straightforward upper bound is given by the radius of the interactions graphs of  $n^{1/d}$ and $\log n$, so that our results are essentially optimal.  Assuming the random quantum circuit models accurately capture the scaling behavior of typical Hamiltonians with the same interaction graph,  as argued in \cite{HP07} these time scales determine the time at which a quantum state which falls into a black hole sometime after half the black hole has evaporated will be accessible from an observer who knows the dynamics of the black hole and has been collecting all off the Hawking radiation.

\subsection{Parallel circuit model on the complete graph}
Recall that in the parallel circuit model, a random maximum matching of the qubits is chosen and a random gate is applied on each edge of the matching.  Consider figure \ref{fig:decoupling} with the systems $M$ and $M'$ having a constant size $m$ (think of $M'$ as the message), and $A$ and $A'$ are in a maximally entangled state. Clearly, if we have access to the whole output we can recover the message, i.e., a purification of $M$. The following theorem proves that if we apply a parallel random circuit of depth $O(\log n)$, a sufficiently large constant number of randomly chosen qubits of the output together with the system $A$ are sufficient for approximately recovering the message. As mentioned earlier, this is equivalent to proving that the system $M$ is decoupled from a subset of the qubits of size $n-c$ for some constant $c$.

\begin{theorem}
\label{thm:parallel}
Let $\e >0$ and $m$ be a constant. In the setting described above, we have on average over a randomly chosen $T$ of size $n-c$ for some sufficiently large $c$ (depending on $\e$ and $m$) such that
\[
\left\| \rho_{MT} - \frac{\1}{2^m} \otimes \frac{\1}{2^{|T|}} \right\|_1 \leq \e
\]
with probability $1- O(\log^3 n/n)$ over the choice of the circuit. Here, $\rho_{MT}$ refers to the state you obtain by applying a random parallel circuit of depth $O(\log n)$.
\end{theorem}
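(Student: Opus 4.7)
The plan follows the Pauli-basis / Markov-chain template of Theorem~\ref{thm:decoupling}, with two twists: the complement of $T$ has constant size $c$ (so the counting argument differs from the $|T^c|=fn$ case), and the relevant Markov chain is the parallel one rather than the sequential one. With $\sigma = \1/2^{m+|T|}$ and $a_{\mu,\lambda}(U) = 2^{-n}\tr[\sigma_\lambda U\sigma_\mu U^\dagger]$, Cauchy--Schwarz applied after expanding $\rho_{MT}$ in the Pauli basis gives
\[
\|\rho_{MT}-\sigma\|_1^2 \;\leq\; 2^{m+|T|}\|\rho_{MT}-\sigma\|_2^2 \;=\; \sum_{\substack{(\nu,\tilde\lambda)\neq 0\\ \supp(\tilde\lambda)\subseteq T}} a_{(\nu,0),\tilde\lambda}(U)^2,
\]
where $\nu\in\{0,1,2,3\}^m$, $\tilde\lambda\in\{0,1,2,3\}^n$, and $(\nu,0)$ is the $n$-qubit Pauli equal to $\sigma_\nu$ on $M'$ and identity on $A'$: the identity arises because tracing out $A$ annihilates every non-identity Pauli contribution from $\psi_{AA'}$. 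The $\nu=0$ term produces only $\tilde\lambda=0$ (excluded), so effectively $\nu\neq 0$.

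Next, taking expectations over a uniformly random $T$ of size $n-c$ and over the random parallel circuit $U$, a Pauli $\tilde\lambda$ of weight $w$ satisfies $\prw_T[\supp(\tilde\lambda)\subseteq T] = \binom{n-w}{c}/\binom{n}{c}\leq(1-w/n)^c$, and $\exc{U}{a^2}$ equals the second-moment transition probability $Q_t((\nu,0),\tilde\lambda)$ of the parallel circuit. By the permutation invariance of the parallel model on the complete graph, $Q_t$ depends on its arguments only through their weights, so summing $Q_t$ over all $\tilde\lambda$ of a fixed weight produces a weight-only chain $P_t(\ell,k)$ analogous to the one in Theorem~\ref{thm:strong-scrambling-seq} but with different transition probabilities. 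Letting $X_t(\ell)$ denote the weight at time $t$ started at $\ell$, and since $w(\nu)\leq m$ is a constant with at most $4^m-1$ nonzero choices,
\[
\exc{U,T}{\|\rho_{MT}-\sigma\|_1^2} \;\leq\; (4^m-1)\,\max_{1\leq\ell\leq m}\exc{}{\binom{n-X_t(\ell)}{c}\big/\binom{n}{c}}.
\]

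The technical core is to show that after $t=\Theta(\log n)$ parallel steps the weight reaches a linear fraction of $n$ with high probability starting from any constant $\ell\geq 1$. The driving computation: when $X_s=k$ with $k\ll n$, a uniform random matching pairs a tagged non-identity qubit with another non-identity one only with probability $O(k/n)$, so typically the $k$ non-identity qubits sit in $k$ distinct pairs; a Haar (or Clifford) two-qubit gate then turns each such pair into a uniformly random non-identity two-qubit Pauli of expected weight $8/5$. Hence $\exc{}{X_{s+1}\mid X_s=k}=\tfrac{8}{5}k(1-O(k/n))$, so the mean grows geometrically by $8/5$ per step until it saturates near the equilibrium value $3n/4$. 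For $t=C\log n$ with $C>1/\log(8/5)$ the mean exceeds $\alpha n$ for any chosen $\alpha<3/4$; to upgrade this to high probability I would track $\log(X_t+1)$ as a submartingale with $O(1)$ bounded increments and apply Azuma, or equivalently control a moment generating function $\exc{}{z^{X_t}}$ for a suitable $z>1$ that linearizes the $8/5$ step. This yields $\prw[X_t<\alpha n]\leq 1/\poly(n)$, and therefore $\exc{}{\binom{n-X_t}{c}/\binom{n}{c}}\leq(1-\alpha)^c+1/\poly(n)$, which is $O(\e^2\cdot\log^3 n/n)$ once $c$ is chosen large enough in terms of $\e$ and $m$.

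Finally, Markov's inequality on the circuit randomness converts $\exc{U,T}{\|\rho_{MT}-\sigma\|_1^2}\leq\e^2\cdot O(\log^3 n/n)$ into $\exc{T}{\|\rho_{MT}-\sigma\|_1^2}\leq\e^2$ with probability $1-O(\log^3 n/n)$ over $U$; Cauchy--Schwarz on the $T$-average then passes from $\exc{T}{\|\cdot\|_1^2}$ to $\exc{T}{\|\cdot\|_1}\leq\e$. The main obstacle will be the quantitative concentration of the weight chain at logarithmic depth: the polynomial mixing-time bounds inherited from the sequential model are useless after only $O(\log n)$ steps, so one must exploit the parallel structure directly, in particular the multiplicative (rather than diffusive) per-step growth of the mean, to force the exponentially small tail needed to match the claimed probability bound.
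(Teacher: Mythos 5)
Your overall architecture matches the paper's: expand $\rho_{MT}$ in the Pauli basis, apply Cauchy--Schwarz to pass from trace norm to $2$-norm, note that $\exc{U}{a^2}$ is the $t$-step transition probability of a Markov chain on Pauli strings, observe that for the parallel model the weight is itself Markovian by permutation invariance, and reduce everything to showing the weight reaches a linear fraction of $n$ in $O(\log n)$ steps. Your heuristic for the growth rate ($9/15$ of touched pairs double, expected factor $8/5$ per step, so $t\gtrsim \log n/\log(8/5)$ suffices) is exactly the driving calculation of the paper. Where you genuinely diverge is in the concentration argument at the core. The paper runs a two-phase analysis on the \emph{subset} process $S_t$: in the first phase ($|S|\leq O(\log n)$) it conditions on the event $\eventfont{E}$ that no two elements of $S$ are ever matched together (costing a crude union bound $\pr{\eventfont{E}^c}=O(\log^3 n/n)$, which is in fact the source of the $\log^3 n/n$ in the theorem), and then the size is stochastically $k+\bin(k,9/15)$ so Chernoff applies; in the second phase ($|S|\geq\Omega(\log n)$), collisions can no longer be excluded, so the paper bounds the number $N_S$ of internal matchings by computing its variance and applying Chebyshev, then applies Chernoff to the binomial count of surviving/new occupants. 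You instead propose a single unified Azuma or MGF argument on $\log(X_t+1)$, which is a plausible alternative since the one-step ratio $X_{t+1}/X_t$ lies in $[1/2,2]$ always (every touched pair keeps at least one non-identity qubit), giving bounded increments; this would, if carried out, give $1/\poly(n)$ rather than the $\log^3 n/n$ that the paper's crude union bound over $\eventfont{E}^c$ produces.

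That said, there are concrete gaps you would have to close. First, $\exc{}{z^{X_t}}$ with $z>1$ is not the right object for a chain with multiplicative drift: that quantity explodes doubly exponentially and the recursion doesn't linearize. The usable MGF is of $\log X_t$, i.e.\ $\exc{}{X_t^{-\lambda}}$ for $\lambda>0$ for the lower tail, which you seem to realize ("or equivalently") but do not state correctly. Second, the claim that $\log(X_t+1)$ is a submartingale with positive drift uniformly over $1\leq X_t\leq\alpha n$ needs proof: for small $k$ you must directly check the expected increment, for moderate $k$ you must control $\text{Var}(X_{t+1}\mid X_t)$ to lower-bound $\exc{}{\log X_{t+1}}$ via a second-order bound (Jensen only gives the wrong direction), and for $k$ a non-negligible fraction of $n$ you must show the contribution of the $\Theta(k^2/n)$ probability of internal matchings (which can \emph{decrease} $X$) does not kill the drift. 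Third, bounding a hitting time $\tau=\inf\{s:X_s\geq\alpha n\}$ is not the same as bounding $\pr{X_t<\alpha n}$ at a fixed $t$; you need an additional argument that once the walk hits $\alpha n$ it does not fall back, which the paper also brushes past. Finally, your closing step is arithmetically off: $(1-\alpha)^c+1/\poly(n)$ for a \emph{constant} $c$ is a constant, not $O(\e^2\log^3 n/n)$, so Markov's inequality as you invoke it does not produce the claimed $1-O(\log^3 n/n)$ probability over circuits. To get the "with probability over the circuit" quantifier one should instead union-bound the failure of the weight chain over the $4^m-1$ starting Paulis (each of constant weight $\leq m$), giving a good event over the circuit of probability $1-O(\log^3 n/n)$, and for good circuits bound the $T$-average by $(4^m-1)(1-\alpha)^c$, which is then forced $\leq\e^2$ by choosing $c$ large in terms of $\e,m$; this matches what the paper is implicitly doing.
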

\begin{proof}
We can write the initial state on $MM'A'$ as
\[
\rho(0) = \frac{1}{2^{2m}} \sum_{\nu \in \{0,1,2,3\}^m} \sigma_{\nu} \otimes \sigma_{\nu} \otimes \frac{\1}{2^{n-m}}.
\]
Then we have, as in the proof of Theorem \ref{thm:strong-scrambling-seq},
\begin{align*}
\ex{\left\| \rho(t)_{MT} - \frac{\1}{2^{m + n-c}} \right\|^2_1} 
&\leq \sum_{ \nu \in \{0,1,2,3\}^m, \mu \in \{0,1,2,3\}^{T}, (\nu, \mu) \neq 0} \ex{\tr[ \sigma_{\nu} \otimes \sigma_{\mu} \otimes \1 \rho(t) ]^2} \\
&= \sum_{ \nu \in \{0,1,2,3\}^m, (\nu, \mu) \neq 0 } \ex{\tr[ \sigma_{\nu} \otimes \sigma_{\mu} \rho(t) ]^2} \\
&= \sum_{ \nu \in \{0,1,2,3\}^m, \nu \neq 0 } \tr[\sigma_{\nu} \otimes \sigma_{\nu} \otimes \1 \rho(0)]^2 \pr{S_t(\supp(\nu)) \subseteq T} \\
&= O(\pr{S_t(\{1\}) \subseteq T})
\end{align*}
where we defined the Markov chain $\{S_t\}$ whose state space is the set of subsets $[n]$ which corresponds to the set of non-zero Pauli operators. The transition probabilities of the Markov chain are defined as follows. We start by choosing  a random maximum matching of the nodes. For each edge $\{i,j\}$ of the matching, we do the following: if neither $i$ nor $j$ are in $S_t$, they are still not in $S_{t+1}$, but if one of the nodes $\{i, j\}$ is in $S_t$, then with probability $9/15$, $i$ and $j$ are in $S_{t+1}$ and with probability $3/15$, $i \in S_{t+1}$ and $j \notin S_{t+1}$ and with probability $3/15$, $j \in S_{t+1}$ and $i \notin S_{t+1}$. As before, we use the notation $S_t(A)$ when the Markov chain starts in the state $A$. Here our Markov chain is assumed to start in the state $\{1\}$, so we will drop the $(\{1\})$ from now on.

\begin{lemma}
\label{lem:mc-subsets}
For a sufficiently large constant $c$ and $t \geq c \log n$ and sufficiently small constant $f$, we have
\[
\pr{|S_t| \leq f n} \leq O\left(\frac{\log^3 n}{n} \right).
\]
\end{lemma}
Before proving the lemma, we just note that it is sufficient to prove the desired result. In fact, for a randomly chosen $T$ of size $n-c$, we have
\begin{align*}
\pr{S_t \subseteq T} &\leq \pr{\forall x \in [n]-T, x \not\in S_t, |S_t| > fn} + \pr{|S_t| \leq f n} \\
				&\leq \pr{x \not\in S_t, |S_t| > fn}^c + O\left(\frac{\log^3 n}{n} \right) \\
				&\leq (1-f)^c + O\left(\frac{\log^3 n}{n} \right)		 
\end{align*}
where $x$ is uniformly distributed on $[n]$. This proves the theorem.
\end{proof}

\begin{proof}[of Lemma \ref{lem:mc-subsets}]
The analysis has two steps. The first part of the proof deals with the case where $S = O(\log n)$ and the second part with the case $S = \Omega(\log n)$.

Define $T_1 = \min \{ t : |S_t| \geq 10 \log n\}$. The pre-factor $10$ is chosen only for concreteness and can of course be chosen to be any constant and the statement remains unchanged. We start by proving that 
\[
\pr{T_1 \geq c_1 \log n} = O \left(\frac{\log^3 n}{n} \right).
\]
Let $\eventfont{E}$ be the event that for all $s \leq c_1 \log n$, nodes $i, j \in S_s$ never get matched. We have
\begin{align}
\label{eq:condition-no-collision}
\pr{T_1 \geq c_1 \log n} &= \pr{T_1 \geq c_1 \log n, \eventfont{E}} + \pr{T_1 \geq c_1 \log n, \eventfont{E}^c}.
\end{align} 
Let us analyze the second term first. For this, we denote the matching by $\{(M^1_k, M_k^2)\}_{1 \leq k \leq n/2}$.
\begin{align*}
\pr{T_1 \geq c_1 \log n, \eventfont{E}^c} &= \pr{ \exists s \in [c_1 \log n], k \in [n/2] : M^1_k,M^2_k \in S_s, T_1 \geq c_1 \log n} \\
&\leq \sum_{s = 1}^{c_1 \log n} \pr{ \exists k \in [n/2] : M^1_k,M^2_k \in S_s, S_s \leq 10 \log n} \\
&\leq c_1 \log n \cdot \frac{10 \log n \cdot (10 \log n - 1)}{2n} \\
&\leq \frac{100 c_1 \log^3 n}{n}.
\end{align*}
We now focus on the first term in \eqref{eq:condition-no-collision}. Because $\eventfont{E}$ holds, we know that $|S_{s+1}| \geq |S_s|$ for all $s \in [c_1 \log n]$. More precisely, if $|S_s| = k$, we have $|S_{s+1}|$ is distributed as $k + \bin(k, 9/15)$. But using a Chernoff-Hoeffding bound, we have
\[
\pr{\bin(k, 9/15) \leq k/2} \leq e^{-\frac{1}{2 \cdot 3/5} k (3/5-1/2)^2} \leq e^{-k/200}.
\]
We can now define the times $T(i) = \min \{ t : |S_t| \geq 2^i \}$. With this notation, and letting $m = \log (10 \log n)$, we have $T_1 = T(m)$. We can write
\begin{align*}
\pr{T_1 \geq c_1 \log n, \eventfont{E}}
&\leq \pr{T(1) \geq c'_1 2^{m-1}} + \pr{T(1) < c'_1 2^{m-1}, T(2) \geq c'_1 (2^{m-1} +  2^{m-2}) } + \\
&\qquad \dots +\pr{T(1) + \cdots + T(m-1) < c'_1 (2^{m-1} + \dots 2^{1}), T(m) \geq c'_1 (2^{m-1} + \dots + 1)} \\
&\leq e^{-1/200 \cdot c'_1 2^{m-1}} + e^{-2/200 \cdot c'_1 2^{m-2}} + \dots + e^{-2^{m-1}/200 \cdot c'_1} \\
&= m \cdot e^{-c'_1 2^{m-1}/200}.
\end{align*}
where $c'_1$ is chosen so that $c'_1 (2^{m} - 1) = c_1 \log n$. For $c_1$ (or equivalently $c'_1$) large enough, this expression is at most $1/n$.

For the second part, we consider a large enough subset $S \subseteq [n]$ and we prove that in one step of the random circuit, the size will increase by a constant fraction with high probability. Now it is not possible to assume that we do not have any gates within $S$ itself. But the fact that $S$ is large, we can have better concentration. First given an $S$, let $N_S$ be the number of gates that are between two nodes of $S$. It is easy to see that the expected number of such gates is $\ex{N_S} = \frac{|S|(|S|-1)}{2} \cdot \frac{1}{n-1}$. Actually what we want is to bound $\pr{N_S > \beta |S|}$ where $\beta$ is some small constant to be fixed later. We could use a straight Markov inequality
\[
\pr{N_S > \beta |S|} \leq \frac{\ex{N_S}}{\beta |S|} = \frac{ (|S|-1)}{2\beta (n-1)},
\]
which is good enough for $|S| = o(n)$ but does not give a good bound for linear $|S|$. That's why we compute the second moment of $N_S$.
\begin{align*}
\ex{N_S^2} &= \sum_{i < j, k < l} \ex{\1_{(i,j)} \1_{(k,l)}} \\
			&= \sum_{i < j} \ex{\1_{(i,j)}} + \sum_{i<j, k < l,k, l \notin \{i,j\}} \ex{\1_{(i,j)} \1_{(k,l)}} \\
			&= \frac{|S|(|S|-1)}{2} \frac{1}{n-1} + \frac{|S|(|S|-1)}{2} \cdot \frac{(|S|-2)(|S|-3)}{2} \cdot \frac{1}{(n-1)(n-3)}. 
\end{align*}
where $\1_{(i,j)}$ is one if there is a gate applied between nodes $i$ and $j$, which are both in $S$. Thus the variance is equal to
\begin{align*}
\ex{N_S^2} - \ex{N_S}^2 &= \frac{|S|(|S|-1)}{2} \frac{1}{n-1} + \frac{|S|(|S|-1)}{2} \cdot \frac{(|S|-2)(|S|-3)}{2} \cdot \frac{1}{(n-1)(n-3)} - \frac{|S|^2(|S|-1)^2}{4} \cdot \frac{1}{(n-1)^2} \\
&= \frac{|S|(|S|-1)}{2} \frac{1}{n-1}\left( 1 + \frac{(|S|-2)(|S|-3)}{2} \cdot \frac{1}{n-3} - \frac{|S|(|S|-1)}{2} \frac{1}{n-1} \right).
\end{align*}
But $(|S|-2)(|S|-3) (n-1) = (|S|^2 - 5 |S| + 6) (n-1) = (|S|^2 - |S|)(n-1) - 2(2|S|-3) (n-1)$ and we compare that to $(|S|^2 - |S|) (n - 3) = (|S|^2 - |S|) (n-1) - 2 (|S|(|S|-1))$. The first term is smaller than the second one provided $|S| \geq 3$ and $|S| \leq n$ (which is the case). Thus we can bound the variance by the expected value:
\[
\ex{N_S^2} - \ex{N_S}^2 \leq \ex{N_S}.
\]
By applying Chebyshev's inequality, we have for any $\gamma > 0$,
\[
\pr{N_S > (1 + \gamma) \ex{N_S} } \leq \frac{1}{\gamma^2 \ex{N_S}}.
\]
Now if $\beta$ is such that $|S| < 2 \beta n$, we define $\gamma$ so that $(1+\gamma) = \beta |S|/\ex{N_S} = 2 \beta \frac{n-1}{|S|-1}$. As a result,
\begin{align*}
\pr{N_S > \beta |S|} &\leq \frac{1}{(\beta \frac{|S|}{\ex{N_S}} - 1 )^2 \ex{N_S}} \\
			&= \frac{2 (n-1)}{(\beta \frac{2(n-1)}{|S|-1} - 1 )^2 |S| (|S|-1)} \\
			&\leq \frac{2 (n-1)}{( 2\beta (n-1) - (|S| - 1) )^2 } \\
			&= O\left(\frac{1}{n} \right),
\end{align*}
provided for example $|S| - 1 \leq  \beta (n-1)$.

We proved that we can assume that the number of gates within $S$ is small. For the gates that associate a node in $S$ with a node outside $S$, we need to prove that many of these gates lead to a Pauli operator of weight two so that we obtain an overall increase in the size of $S$. In fact, provided $N_S < \beta |S|$, the number of non-zero Pauli operator is distributed at least as $(1-\beta) |S| + \bin( (1-2\beta) |S|, 3/5)$. Now we can bound using a standard Chernoff bound
\begin{align*}
\pr{\bin( (1-2\beta) |S|, 3/5) < (\beta + 1/4) \cdot |S|} 
&\leq \exp{-\frac{5}{6} \cdot \frac{\left(3/5 (1-2\beta) |S| - (\beta + 1/4) |S|  \right)^2}{(1-2\beta)|S|}} \\ 
&=\exp{-\frac{5 |S| }{6} \cdot \frac{\left(7/20 - (6/5 + 1)\beta\right)^2}{(1-2\beta)}}. \\
\end{align*} 
For sufficiently small $\beta$ and sufficiently large $|S| = \Omega(\log n)$, this probability is $O(1/n)$.

This proves that provided $c \log n \leq |S_s| \leq \beta n$ for a sufficiently large $c$ and sufficiently small $\beta$, then we have $|S_{s+1}| \geq 5/4 \cdot |S_{s}|$. Together with the first part of the proof, we obtain that after $O(\log n)$ steps of the random circuit we have $|S_t| \geq \beta n$ with probability $1 - O(\log^3 n/n)$. 
\end{proof}

\subsection{Random circuit on a $d$-dimensional lattice}
We now turn to examining the depth of a random quantum circuit restricted to nearest neighbours on a $d$-dimensional square lattice, required to scramble a constant number of initial low weight Pauli strings.  As was shown in the previous section, this determines the depth at which we obtain entanglement assisted codes.


 We analyze scrambling in a model for which a partial parallelization has been performed of a sequential random quantum circuit on on a $d$-dimensional lattice.  By a $d$-dimensional sequential random quantum circuit we mean one for which a random two-qubit gate selected according to the Haar measure or uniformly from the Clifford group is applied to a pair of qubits selected uniformly from among nearest neighbors on a square $d$-dimensional lattice with open boundary conditions. The specific model consists of partitioning the lattice into coarse grained cells and in each time step performing a  random two-qubit gate between a randomly selected pair of nearest neighbour within each cell.  We consider two equivalent coarse grainings of the lattice into square cells of size $O(\log n)$, such that the midpoints of the cells of the first set are the corners of the cells of the second set. We then have cells of type $1$ corresponding to the first coarse graining and cells of type $2$ corresponding to the second coarse graining. In alternating coarse time steps,  gates are applied within each of the the cells of one set at a time.  In each coarse time step a total of  $O(\log^2 n)$ gates will be performed.
 
The following theorem proves that after $O((\frac{n}{\log n})^{1/d})$ coarse grained time steps, a sufficiently large constant number of randomly chosen qubits of the output together with the system $A$ are sufficient for approximately recovering the message (see Figure \ref{fig:decoupling}). As mentioned earlier, this is equivalent to proving that the system $M$ is decoupled from a subset of the qubits of size $n-c$ for some constant $c$.


\begin{theorem}
\label{ddim}
Let $\e >0$ and $m$ be a constant. In the setting described above, we have on average over a randomly chosen $T$ of size $n-c$ for some sufficiently large $c$ (depending on $\e$) such that
\[
\left\| \rho_{MT} - \frac{\1}{2^m} \otimes \frac{\1}{2^{|T|}} \right\|_1 \leq \e
\]
with probability $1- O(1/n)$ over the choice of the circuit. Here, $\rho_{MT}$ refers to the state you obtain by applying a random quantum circuit of depth $O(n^{1/d} \log^2(n))$ as described above.
\end{theorem}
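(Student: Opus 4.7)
The plan is to adapt the argument of Theorem \ref{thm:parallel} to the $d$-dimensional coarse-grained model. First I would repeat the reduction from the trace distance to a Markov chain on subsets, obtaining
\[
\ex{\left\| \rho(t)_{MT} - \frac{\1}{2^{m+|T|}} \right\|_1^2} = O(\pr{S_t \subseteq T}),
\]
where $S_t \subseteq [n]$ is the Markov chain induced by the $d$-dimensional model, initialized to a constant-sized set (since $|M|=m$ is constant). As in Theorem \ref{thm:parallel}, everything then reduces to showing that after depth $O(n^{1/d} \log^2 n)$ we have $|S_t| \geq \beta n$ for some constant $\beta > 0$, except with probability $O(1/n)$; the final bound on $\pr{S_t \subseteq T}$ for a random $T$ of size $n-c$ is $(1-\beta)^c + O(1/n)$ exactly as before.

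Then I would decompose the analysis of $|S_t|$ into two scales: within a single cell, and between cells. For the within-cell step, a single coarse time step applies $O(\log^2 n)$ random nearest-neighbour gates inside each cell of size $O(\log n)$ qubits (side length $L = (\log n)^{1/d}$). I would show that if a cell $C$ contains at least one node of $S$ at the start of a coarse step, then $|S \cap C|$ reaches a constant fraction of $|C|$ by the end of the step with probability $1 - n^{-\Omega(1)}$. Since $L \ll \log^2 n$, a Lieb--Robinson-type argument shows that the support first spreads to fill the light cone within $O(L)$ steps, after which the same $9/15$ versus $3/15$ expansion analysis used in Lemma \ref{lem:mc-subsets} drives $|S \cap C|$ up to a constant fraction of $|C|$ in $O(\log|C|) = O(\log\log n)$ additional gates, with Chernoff concentration providing the high-probability bound.

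For the between-cell step, I would exploit the overlapping coarse grainings: every type-$2$ cell contains portions of $2^d$ type-$1$ cells, and vice versa. Thus if at the start of a type-$2$ coarse step some neighbouring type-$1$ cell is saturated (has $|S \cap C| = \Omega(|C|)$), then the type-$2$ cell inherits $\Omega(|C|/2^d)$ nodes of $S$ at its start and, by the within-cell claim applied over the next coarse step, itself becomes saturated. Iterating, the ``saturation front'' advances by at least one cell per coarse step in every coordinate direction, alternating between the two cell types; since the cell grid has diameter $O((n/\log n)^{1/d})$, after $O((n/\log n)^{1/d})$ coarse steps every cell is saturated and $|S_t| = \Omega(n)$. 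A union bound over the $O(n/\log n)$ cells and $O((n/\log n)^{1/d})$ coarse steps keeps the overall failure probability at $O(1/n)$, and the total depth is $O((n/\log n)^{1/d}) \cdot O(\log^2 n) \leq O(n^{1/d} \log^2 n)$.

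The main obstacle I expect is the single-cell claim. Unlike in Lemma \ref{lem:mc-subsets}, where the complete-graph geometry lets $|S|$ at least double in each step as soon as it is large enough, on a $d$-dimensional cell the support must first spread ballistically from its initial seed before any multiplicative growth can kick in, and one must rule out the support becoming stuck in a small subregion. I would address this by a coupling argument between $|S \cap C|$ and a simpler dominated process (e.g., a first-passage or birth process on the boundary of $S$), using the fact that $\log^2 n$ is a very generous budget compared to both the $O(L)$ ballistic spreading time and the $O(\log |C|)$ exponential-growth time; the large slack makes the required failure probabilities straightforward to bound by $n^{-\Omega(1)}$.
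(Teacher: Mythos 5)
Your high-level strategy — reduce to the subset Markov chain $S_t$, then argue by a within-cell step plus an alternating coarse-graining step that ``contamination'' spreads at roughly one cell per coarse time step — matches the paper's. But your within-cell argument is where the proof diverges, and it is the step with a genuine gap.

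Your plan for the within-cell claim is Lieb--Robinson spreading followed by the $9/15$ vs.\ $3/15$ multiplicative-growth analysis of Lemma~\ref{lem:mc-subsets}. That analysis does not transfer to a lattice cell. On the complete graph, a matching pairs each node of $S$ with a uniformly random partner, so almost every node of $S$ gets a chance to spread; on a $d$-dimensional lattice, only gates straddling the interface between $S$ and its complement can grow $S$, and a localized seed has a boundary that is a vanishing fraction of its volume. You cannot get multiplicative growth from $1$ node to a constant fraction of $|C|$ in $O(\log\log n)$ gates; even filling the light cone ballistically already requires $\Omega(|C|)$ sequential gates and, worse, a Lieb--Robinson bound only upper bounds the support, it gives no lower bound on how dense the support becomes inside the light cone. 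The ``coupling to a birth process on the boundary'' that you gesture at in your last paragraph is exactly the part that would need a real argument, and it is the hard part. This is precisely the obstacle the paper sidesteps: instead of trying to control $|S \cap C|$ directly, the paper proves a lower bound of $\Omega(1/|C|)$ on the spectral gap of the cell's second-moment operator (Theorem~\ref{thm:gaps}), then uses the standard gap-to-mixing-time bound to conclude that after $O(\log^2 n)$ gates the Pauli distribution inside the cell is $1/\poly(n)$-close to uniform over nonzero Pauli strings on the cell. From that it is immediate (by a union bound over the $2^d$ sub-cells) that the support hits every sub-cell with probability $1-1/\poly(n)$, which is the weaker within-cell property the paper actually needs.

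Note also that the paper's between-cell step requires less than yours. It does not need type-$2$ cells to inherit $\Omega(|C|)$ contaminated nodes; it only needs each overlapping type-$2$ cell to receive at least one nonzero Pauli site, which is guaranteed because the type-$1$ cell's support intersects all $2^d$ sub-cells, each of which is contained in a distinct overlapping type-$2$ cell. This weaker ``one node suffices, then mix within the cell'' invariant is what lets the spectral-gap bound carry the whole argument. Your proposal asks for more per cell than the mixing-time tool can cheaply deliver and then tries to supply it with a growth argument that does not work in lattice geometry.
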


\begin{proof}
As in the proof of Theorem \ref{thm:parallel}, we only need to show that a Pauli string of weight one becomes a Pauli string of linear weight within $O(n^{1/d} \log^2 n)$ time steps with high probability. 
We start by proving (in Theorem \ref{thm:gaps} below) a lower bound of $\Omega(1/n)$ on the gap of the second moment operator of a sequential random quantum circuit on $n$ qubits with a $d$-dimensional lattice interaction graph (or equivalently on the corresponding the Markov chain described in Section \ref{sec:prelim-rqc}). Then by a standard argument, one can obtain an upper bound on the mixing time of the Markov chain; see e.g., \cite{MT06}. That is after $t = O(\frac{1}{\Delta} (n + \log(1/\delta)))$, the distribution on Pauli strings is $\delta$-close to the the stationary distribution of the Markov chain which is the uniform distribution over all non-zero Pauli strings.

We then apply this result to a cell which is a $d$-dimensional lattice with $O(\log n)$ nodes.  Then partition the cell into $2^d$ $d$-dimensional sub-cells with half the length of the original cell. Then, if we choose $\delta$ to be inverse polynomial with a sufficiently large power, and applying $O(\log n ( \log n + \log(1/\delta))) = O(\log^2n)$ gates, a non-zero Pauli string in the cell gets mapped to a Pauli string whose support contains at least on element in each one of these sub-cells with probability $1-1/\poly(n)$. We can summarize this as follows: in each successful coarse time step every cell that contains a non-zero Pauli string gets mapped to a Pauli string that has support in each one of its sub-cells. By choosing the constants appropriately, we can make the probability of success of a coarse step to be $1-1/\poly(n)$.

Consider now the following coarse time step that uses the alternate coarse graining.
What the success of the previous coarse step is saying is that a cell with non-zero Pauli weight has contaminated all the cells of type $2$ that overlap with it. By the same argument as in the previous paragraph, we see that in the coarse step involving cells of type $2$, each one of these contaminated cells of type $2$ will in turn contaminate the cells of type $1$ that overlap with it. Thus, by repeating these alternate steps a number of times corresponding to the diameter of the graph of cells $O\left( \left(\frac{n}{\log n}\right)^{1/d}\right)$, we reach a Pauli string of linear weight.
\end{proof}

\begin{theorem}
\label{thm:gaps}
The spectral gap, $\Delta_{ds}$, of the second-order moment operator for the $d$-dimensional sequential random quantum circuit  described above is bounded from below by $\Delta_{ds}\ge \frac{a}{n} $ for a constant $a$.
\end{theorem}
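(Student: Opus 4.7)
The plan is to exploit frustration-freeness of the second-order moment operator together with a comparison to a one-dimensional sequential random circuit, whose spectral gap is already known to be $\Omega(1/n)$.

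First, as indicated in the discussion preceding the theorem, I would write $M_{\circuit}=\frac{1}{|E|}\sum_{e\in E}m_{e}$, where $E$ is the edge set of the $d$-dimensional lattice and $m_{e}$ is the single-gate moment operator associated with the edge $e$. Each $m_{e}$ is Hermitian with spectrum in $[0,1]$ and its $1$-eigenspace contains the two-dimensional subspace $\cV=\mathrm{span}\{\1^{\otimes 2n},\mathrm{SWAP}^{\otimes 2n}\}$. Because the lattice is connected, the intersection of these $1$-eigenspaces over $e\in E$ is exactly $\cV$, which also equals the $1$-eigenspace of $M_{\circuit}$. Hence $I-M_{\circuit}=\frac{1}{|E|}\sum_{e\in E}(I-m_{e})$ is a frustration-free sum of positive semidefinite terms.

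Next, I would pick a Hamiltonian path $P$ through the lattice (built by a standard snake traversal), consisting of $n-1$ edges. Using only that each $I-m_{e}\succeq 0$, dropping the edges not on $P$ yields the operator inequality
\[
I-M_{\circuit}\;\succeq\;\frac{1}{|E|}\sum_{e\in P}(I-m_{e})\;=\;\frac{n-1}{|E|}\bigl(I-M_{\circuit}^{(1)}\bigr),
\]
where $M_{\circuit}^{(1)}\eqdef\frac{1}{n-1}\sum_{e\in P}m_{e}$ is the second-order moment operator of a one-dimensional sequential random circuit on $P$. Since $P$ is itself connected, the $1$-eigenspace of $M_{\circuit}^{(1)}$ is again $\cV$; thus the kernels of $I-M_{\circuit}$ and $I-M_{\circuit}^{(1)}$ coincide. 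Restricting to $\cV^{\perp}$ turns the operator inequality into an inequality between smallest positive eigenvalues and gives $\Delta_{ds}\geq\frac{n-1}{|E|}\Delta_{1D}$. Combining the known bound $\Delta_{1D}=\Omega(1/n)$ (proved in \cite{BHH12} via Nachtergaele's martingale method \cite{Nachtergaele}) with $|E|\leq dn$ yields $\Delta_{ds}\geq a/n$ for a constant $a$ depending only on $d$.

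The main substantive input is the one-dimensional gap bound; everything else is a routine positivity comparison. The one subtlety that makes the reduction work is the equality of the $1$-eigenspaces of $M_{\circuit}$ and $M_{\circuit}^{(1)}$, which depends only on connectedness of the respective graphs and on the fact that a single random Haar or Clifford two-qubit gate fixes exactly $\mathrm{span}\{\1^{\otimes 4},\mathrm{SWAP}^{\otimes 4}\}$. The main obstacle one would face if one tried instead to apply Nachtergaele's theorem directly in the $d$-dimensional geometry is verifying the overlap/martingale condition for overlapping $d$-dimensional blocks; the Hamiltonian-path reduction sidesteps this by importing the 1D gap as a black box.
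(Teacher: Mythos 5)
Your proof is correct, and it takes a genuinely different route from the paper's. The paper's argument first establishes an auxiliary bound $\Delta_q \geq \min_i q_{ii+1}\,\Delta_{1s}$ for non-uniform one-dimensional sequential circuits, then covers the $d$-dimensional edge set with $d$ Hamiltonian snake paths (one per coordinate direction, patched at the boundary) so that $M_{ds}$ can be written exactly as a convex combination of $d$ such non-uniform 1D moment operators, and finally invokes a convexity lemma on spectral gaps (the paper's Lemma on $\Delta \geq p_1\Delta_1 + p_2\Delta_2$, proved via spectral norm and the triangle inequality) to conclude. You instead exploit only the positivity of each $I - m_e$: drop the off-path edges, obtaining the single operator inequality $I - M_{\circuit} \succeq \frac{n-1}{|E|}\bigl(I - M_{\circuit}^{(1)}\bigr)$ for one Hamiltonian path, and then pass to smallest nonzero eigenvalues on $\cV^\perp$. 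This sidesteps both the convexity lemma and the non-uniform 1D step, and does not require that the chosen paths cover every edge; you only need one connected spanning path, which suffices to guarantee $\ker(I-M_{\circuit}^{(1)})=\cV$. The cost is essentially nil: both arguments lose the same factor $|E|/(n-1) \leq d$ (in the paper, the $1/d$ appears because boundary edges may be shared by up to $d$ paths), so the final bound $\Delta_{ds} \geq a/n$ with $a$ depending on $d$ is the same. Your version is the cleaner "Markov-chain comparison" style argument; the paper's version packages the same idea as a reusable convexity lemma for moment-operator spectral gaps.
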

\begin{proof}
The second-order moment operator of a sequential one-dimensional random quantum circuit is of the form, $$M_{1s}=\frac{1}{n-1}\sum_{i=1}^{n-1} m_{ii+1}.$$  We will use the fact that the gap, $\Delta_{1s}$, of the second-order moment operator of a sequential $1D$ random quantum circuit  was shown in  \cite{Znidaric2, BrandaoRH} to be lower bounded by $\Delta_{1s}\ge \frac{a}{n}$, to show a similar lower bound on the spectral gap, $\Delta_q$, of the second-order moment operator for a non-uniform sequential $1D$ random quantum circuit, for which the probability of applying a gate to qubits $i$ and $i+1$ is $\frac{q_{ii+1}}{n-1}$ with $q_{ii+1} > 0$. We next show how to write the second-order moment operator of the $d$-dimensional sequential random quantum circuit as a convex sum of such non-uniform $1D$ sequential random quantum circuits, which we will lead to the desired bound by using the following lemma on the convexity of the spectral gap.

\begin{lemma}
\label{thm:adding}
For two random quantum circuits whose gate distributions are universal and are invariant under Hermitian conjugation,  with second-order moment operators $M_1$ and $M_2$ with spectral gaps of $\Delta_1$ and $\Delta_2$ respectively, the second-order moment operator describing any convex combination of the two random quantum circuits, $M=p_1M_1 + p_2M_2$, has a gap $\Delta$ that is lower bounded by  $\Delta \ge p_1\Delta_1 + p_2\Delta_2$.  
\end{lemma}
\begin{proof}
We use the fact that every second-order moment operator of a random quantum circuit over a universal gate set has the same space of fixed points, $\mathcal{V}$, onto which the second-order moment operator of the Haar measure, $M_{\haar}$, is the projector.   Consequently, we may define the following operators $\tilde{M}=M-M_{\haar},\tilde{M_1}=M_1-M_{\haar}, \tilde{M_2}=M_2-M_{\haar}$ which sets the eigenvalue of this eigenspace to 0.  Now by the triangle inequality it follows that, $$\|\tilde{M}\|_{sp} \le  p_1\|\tilde{M_1}\|_{sp} +p_2\|\tilde{M_2}\|_{sp},$$
 where $\|~\|_{sp}$ is the spectral norm.  Since invariance under Hermitian conjugation of the gate distribution implies that the moment operators are Hermitian, it follows  that $\lambda \le p_1\lambda_1 +p_2\lambda_2$, where $\lambda$, $\lambda_1$ and $\lambda_2$ are the subdominant eigenvalues $M$, $M_1$ and $M_2$, respectively.  Since $\Delta = 1-\lambda$, the lemma follows. 
\end{proof}

The second-order moment operator for a non-uniform $1D$ sequential random quantum circuit is given by,
$$M_{q}=\frac{1}{n-1}\sum_{i=1}^{n-1} q_{ii+1}m_{ii+1}.$$
Observe that because $m_{ii+1}$ are positive semidefinite, we have
\[
M_{q} \geq \min_i q_{ii+1} \cdot  \frac{1}{n-1} \sum_{i=1}^{n-1} m_{ii+1}.
\]
This implies that the spectral gaps satisfy $\Delta_{q} \geq \min_i q_{ii+1} \Delta_{1s} = \min_i q_{ii+1} \cdot \Omega(1/n)$.

\begin{figure}[h]
	\begin{center}
		\includegraphics[scale=0.8]{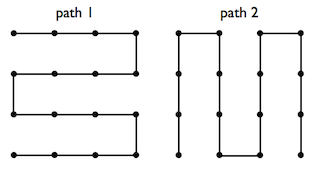}
		\caption{A set of non-intersecting 1D-paths on a 2D 4x4 square lattice}
		\label{fig:paths}
\end{center}
\end{figure}

The goal now is to write the second moment operator for the $d$-dimensional circuit as a convex combination of second moment operators for one-dimensional circuits. For this, we find a set of paths on the $d$-dimensional lattice such that each path includes every vertex and each edge is included in at least one path.   Such a set may be constructed using $d$ paths where the $i$-th path consists of every edge oriented in the $i$-th direction plus some perpendicular edges on the surface of the lattice.   An illustration of such a set is given in Figure \ref{fig:paths} for $d=2$. For such a set of paths every internal edge is traversed by only one path, while an external edge may be traversed by as many as $d$ paths.  Thus, one may write the $d$-dimensional sequential random quantum circuit as an average of $d$ one-dimensional non-uniform sequential random quantum circuits where the pair with the lowest probability is $1/d$ of that of the largest.  Lemma \ref{thm:adding} now implies that the gap, $\Delta_{ds}$, of the $d$-dimensional sequential random quantum circuit is bounded by $\Delta_{ds} \ge \frac{1}{d}\Delta_{1s} \ge \frac{a}{n}$ for a constant $a$.    
\end{proof}

\subsection{Lower bound on the scrambling time}
For a circuit of depth $t$ consisting of gates, each of which act on at most $k$-qubits, an initial Pauli operator of weight 1 can only have support on a qubit distance $kt$ away.  Thus, on a $d$-dimensional lattice it may have support on at most $(kt)^d$ qubits, implying a depth of at least $\Omega(n^{1/d})$ for the weigh to be linear in $n$.  For a random quantum circuit on the complete graph, the weight of a Pauli operator may increase by at most a factor of $k$ in each time step, yielding an lower bound on the depth required for scrambling of $\Omega(\log n)$.   Thus, the time at which most random quantum circuit scramble is within a constant or $O(\log^2 n)$ factor of the fastest possible circuit on the complete graph and a $d$-dimensional graph respectively.  We think that  the $O(\log^2 n)$ factor in the case of the $d$-dimensional lattice is an artifact of our proof technique.

\section{Outlook}

An interesting question is whether decoupling occurs in the general setting of \cite{Dup09,DBWR10} with random circuits of depth $O(\poly(\log n))$.   This would imply that random encoding circuits of $O(\poly(\log n))$ depth generate codes that are close to achieving the capacity of the erasure channel.   

Since the scrambling condition in $d$-dimensional random quantum circuits utilizes a weak bound on the success probability of filling a sufficient number of cells, we think it may be possible to tighten our result to show strong scrambling, and thus stronger decoupling results for circuits of depth $O(n^{1/d} \log n)$ on $d$-dimensional lattices.  It would be interesting to see if the course graining technique employed here can be used to show that the depth at which random quantum circuits are $\epsilon$-approximate $k$-designs also scales with the radius of the interaction graph as conjectured in \cite{BHH12}. 

It is known that the unitary generated by an arbitrary local Hamiltonian at time, $t$, which is a polynomial in the size, $n$, of the system can be approximated by a circuit consisting of single and two-qubit gates whose depth is  polynomial in $t$ \cite{illusion}.  Thus, whether our results imply that Hamiltonian evolution scrambles quickly depends on whether typical time independent Hamiltonians explore sufficiently uniformly the measure accessible to them at times sublinear in $n$.  This question appears to be linked with the approach of random matrix theory \cite{SrednickiETH, RigolETH} to understand thermalization under dynamics generated by strongly-nonintegrable, time independent Hamiltonians, whereby the eigenstates of the Hamiltonian resemble those drawn uniformly from an appropriate matrix ensemble.  It would be interesting to further explore the connection between quantum chaos, properties of random quantum circuits and quantum aspects of thermalization such as scrambling and decoupling.

\section*{Acknowledgements}

We would like to thank Patrick Hayden and David Poulin for helpful discussions. The research of WB is supported by the Centre de Recherches Math\'ematiques at the University of Montreal, Mprime, and the Lockheed Martin Corporation. The research of OF is supported by the European Research Council grant No. 258932, and was started while he was affiliated with McGill University.

\appendix

\section{Analysis of the Markov chain}
\begin{theorem}
\label{thm:mc-convergence}
Let $X_t(\ell)$ be the random variable representing the position of the random walk starting at $\ell$ after $t$ steps. There is a constant $c$ such that for any $f < 1/2$ and $t \geq c n \log^2 n$ and all $\ell \in \{1,\dots,n\}$,
\[
\pr{X_t(\ell) \leq f n} \leq \frac{2^{f \log 3 + h(f)}}{\binom{n}{n/2} 3^{n/2}} + \frac{1}{2^{\ell} \binom{n}{\ell}}\frac{1}{\poly(n)}.
\]
\end{theorem}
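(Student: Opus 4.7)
My plan is to exploit reversibility of the Markov chain together with a Lyapunov/supermartingale argument. The chain $P$ on $\{1,\ldots,n\}$ is reversible with stationary distribution $\pi(k) \propto \binom{n}{k} 3^k$, unimodal with peak near $k \approx 3n/4$ (verified via the detailed-balance ratio $\pi(k{+}1)/\pi(k) = 3(n-k)/(k+1)$). I would split the time horizon as $t = t_1 + t_2$ with $t_1 \approx t/2$, introduce the hitting time $\tau := \min\{s : X_s \geq n/2\}$, and bound via the strong Markov property together with a union bound
\[
\pr{X_t(\ell) \leq fn} \;\leq\; \pr{\tau > t_1} \;+\; \max_{k \geq n/2} \pr{X_{t_2}(k) \leq fn}.
\]
The second summand will produce the stationary-tail contribution (first term of the theorem), while the first produces the $\ell$-dependent contribution.

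For the second summand, reversibility immediately gives $\pr{X_s(k) \leq fn} = \sum_{k' \leq fn} \frac{\pi(k')}{\pi(k)} P^s(k',k) \leq \frac{\pi(\{\leq fn\})}{\pi(k)}$, and unimodality of $\pi$ ensures $\pi(k) \geq \pi(n/2)$ for every $k \geq n/2$. Since $f < 1/2 < 3/4$, the summands $\binom{n}{k'} 3^{k'}$ are geometrically increasing over $\{k' \leq fn\}$, so $\pi(\{\leq fn\}) \leq 2 \binom{n}{fn} 3^{fn}/(4^n-1)$. Combined with Stirling ($\binom{n}{fn} 3^{fn} = \Theta(2^{n(f \log 3 + h(f))}/\sqrt{n})$) and the definition of $\pi(n/2)$, this yields a bound of order $\binom{n}{fn} 3^{fn}/(\binom{n}{n/2} 3^{n/2})$, matching the first term up to constants.

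For the first summand, I introduce the Lyapunov function $g(x) := 1/(2^x \binom{n}{x})$, chosen so that $g(\ell)$ has exactly the desired form $1/(2^\ell \binom{n}{\ell})$. A direct computation with $g(x{\pm}1)/g(x)$ and the transition probabilities yields
\[
\frac{\ex{g(X_{s+1}) \mid X_s = x}}{g(x)} \;=\; 1 + \frac{3x^2 - (2n-13)x - 4(n+1)}{5n(n-1)} \;\leq\; 1 - \frac{c_1}{n}
\]
uniformly on $x \in \{1,\ldots,n/2\}$ for some constant $c_1 > 0$ (the quadratic in the numerator is negative on this range with value $\Theta(-n)$ at the endpoints). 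Hence $Z_s := (1 - c_1/n)^{-(s \wedge \tau)} g(X_{s \wedge \tau})$ is a non-negative supermartingale, and Doob's optional stopping, combined with monotonicity of $g$ on $\{1,\ldots,(2n-1)/3\}$ (so that $g(X_{t_1}) \geq g(n/2)$ on $\{\tau > t_1\}$), gives a naive estimate $\pr{\tau > t_1} \leq (1-c_1/n)^{t_1} \cdot g(\ell)/g(n/2)$.

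The main obstacle is that this naive estimate is insufficient by an exponential factor: $g(\ell)/g(n/2)$ can be as large as $\Theta(2^{3n/2}/\sqrt{n})$, while $(1-c_1/n)^{t_1}$ is only sub-polynomially small for $t_1 = \Theta(n \log^2 n)$. To close the gap I would iterate the Lyapunov argument dyadically across scales $\ell_j := 2^j \ell$ for $j = 0, 1, \ldots, O(\log(n/\ell))$, restarting the supermartingale at each successive hitting time $\tau_j := \min\{s > \tau_{j-1} : X_s \geq \ell_{j+1}\}$. At scale $j$ the contraction factor strengthens to $1 - \Theta(\ell_j/n)$ — much better than the worst-case $1 - 1/n$ — while the chain's natural doubling time is only $O(n/\ell_j)$, so the budget $t_1$ divides efficiently across the $O(\log(n/\ell))$ scales. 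Precise accounting at the very first scale, where the chain may linger near weight $\ell$ and a Chernoff-type concentration on Bernoulli up-moves sets the baseline, is what produces the combinatorial factor $2^\ell \binom{n}{\ell}$ in the final bound.
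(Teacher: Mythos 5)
Your overall decomposition---split at the hitting time $\tau$ of level $n/2$, then bound the post-hitting piece by pushing the stationary distribution forward---matches the paper's proof (equation \eqref{eq:returnfn-from-middle} and the stopping time $T = \min\{t : X_t \geq n/2\}$). Your reversibility argument for the second summand is a clean equivalent of the paper's stationarity argument. One small slip there: you assert $\pi(k) \geq \pi(n/2)$ for all $k \geq n/2$, but $\pi$ is unimodal with peak near $3n/4$, so $\pi(n) < \pi(n/2)$; this is harmless only because on a birth--death chain the first upward entrance into $\{\geq n/2\}$ lands exactly at $n/2$, and for $\ell \geq n/2$ one should stop the chain at its first \emph{downward} crossing of $n/2$ rather than apply the ratio at $k=\ell$.

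The genuine gap is in the hitting-time bound. The paper decomposes $T$ as the number of non-trivial moves $S$ plus geometric waiting times $W_1 + \cdots + W_S$, tracks the minimum position $M$ ever visited, and combines an exact gambler's-ruin bound $\pr{M \leq m} \leq (2n)^m/(2^\ell \binom{n}{\ell})$ (via Lemma \ref{lem:hitting-prob} --- this is the sole source of the combinatorial factor) with a moment-generating-function bound on the waiting times conditioned on the good event $\eventfont{H}$. Your Lyapunov function $g(x)=1/(2^x\binom{n}{x})$ and its drift inequality are correctly computed, and it is a well-chosen function: $g(\ell)/g(m)$ agrees (up to a factor $m(n-m)/(\ell(n-\ell))$) with the paper's bound on $\pr{M\leq m}$, so your function encodes the same combinatorial content. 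But the multi-scale fix has a hole: the strengthened contraction $1-\Theta(\ell_j/n)$ at scale $j$ holds only at positions $x=\Theta(\ell_j)$, while at $x=1$ the contraction is still $1-\Theta(1/n)$ regardless of scale. Restarting the supermartingale at $\tau_{j-1}$ does not prevent the chain from wandering back below $\ell_j$ (or all the way to $1$) during phase $j$; over the positions it may actually visit, the best uniform contraction you can invoke is again $1-\Theta(1/n)$, which is exactly what you correctly diagnosed as insufficient. To make the dyadic scheme work you must separately control how low the walk descends after each $\tau_{j-1}$ --- this is precisely the $\pr{M\leq m}$ estimate the paper proves --- and as written the sketch assumes it rather than proves it. A smaller error: the doubling time at level $\ell_j$ is $\Theta(n)$, not $O(n/\ell_j)$; the per-step drift at position $\ell_j$ is $\Theta(\ell_j/n)$, so covering a distance $\Theta(\ell_j)$ costs $\Theta(n)$ steps. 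This is in fact what forces the $n\log^2 n$ rather than $n$ time scale in the theorem.
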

\comment{I conjecture that one should be able to obtain a bound of $\frac{2^{f \log 3 + h(f)}}{4^n} + \frac{1}{3^{\ell} \binom{n}{\ell}} \frac{1}{\poly(n)}$ but I am not sure how exactly. For the first term, we should take instead of $n/2$ as a reference point the point $3n/4$ (which caries a significant fraction of the total probability mass of the stationary distribution) or something close to it, but the problem then is as $r$ gets closer to $3n/4$, the second term gets worse. I think I can get such a bound for $\ell \leq \sqrt{n}$ or something}
\begin{proof}
The general strategy of the proof is as follows. First, we pick a reference point $r$ (which is going to be $n/2$) for which we can bound $\pr{X_t(r) \leq fn}$ easily. Then we will prove that for any $\ell < r$, starting at $\ell$, we will reach $r$ within $t$ steps with high probability.

The stationary distribution of the chain is given by $\pi(k) = \frac{3^k \binom{n}{k} }{4^n - 1}$ (see \cite[Lemma 5.3]{HL09}). As a result, we have for any $t \geq 1$, 
\begin{align*}
\frac{1}{4^n-1} \sum_{\ell=1}^n 3^{\ell} \binom{n}{\ell} \pr{X_t(\ell) \leq f n}
&= \frac{1}{4^n-1} \sum_{\ell=1}^{fn} 3^{\ell} \binom{n}{\ell} \\
&\leq \frac{3^{fn} 2^{h(f)n}}{4^n-1} \\
&= \frac{2^{ (f \log 3 + h(f))n}}{4^n-1}. 
\end{align*}
%
This allows us to bound the probability of the event $\event{X_t(n/2) \leq fn}$. In fact, for any $t$,
\begin{align}
\label{eq:returnfn-from-middle}
\pr{X_t(n/2) \leq f n} &= \frac{4^n-1}{\binom{n}{n/2} 3^{n/2}} \cdot \frac{\binom{n}{n/2} 3^{n/2}}{4^n-1} \pr{X_t(n/2) \leq f n} \\
&\leq \frac{4^n-1}{\binom{n}{n/2} 3^{n/2}} \cdot 
\frac{1}{4^n-1} \sum_{\ell=1}^n 3^{\ell} \binom{n}{\ell} \pr{X_t(\ell) \leq f n} \notag \\
&\leq \frac{2^{f \log 3 + h(f)}}{\binom{n}{n/2} 3^{n/2}}. \notag
\end{align}


Moreover, note that for $\ell \geq n/2$, we have 
\begin{align*}
\pr{X_t(\ell) \leq fn} 
&\leq \max_{1 \leq s \leq t} \pr{X_s(n/2) \leq fn} \\
&\leq \frac{2^{f \log 3 + h(f)}}{\binom{n}{n/2} 3^{n/2}}
\end{align*}

The remaining case is then $\ell \leq n/2$. In this case, the objective is to show that
\[
\pr{X_t(\ell) \leq f n} \leq \frac{2^{f \log 3 + h(f)}}{\binom{n}{n/2} 3^{n/2}} + 
\frac{1}{2^{\ell} \binom{n}{\ell}} \frac{1}{\poly(n)}. 
\]
Define $T = \min \{t \geq 1 : X_t(\ell) \geq n/2 \}$. Note that we have for any $t$
\begin{align*}
\pr{X_t(\ell) \leq f n} &\leq \pr{T < t, X_t(\ell) \leq fn} + \pr{T \geq t} \\
				&= \pr{T < t, X_{t-T}(n/2) \leq fn} + \pr{T \geq t} \\
			&\leq  \max_{1 \leq s \leq t} \pr{X_s(n/2) \leq fn} + \pr{T \geq t}.
\end{align*}
Using \eqref{eq:returnfn-from-middle}, we can bound the first term. The objective of the remainder of the proof is to bound the probability $\pr{T \geq t}$ when $t = c n \log^2 n$. This is done in Lemma \ref{lem:time-reach-middle} below.
Once we have that, the result follows.
\end{proof}
 
\begin{lemma}
\label{lem:time-reach-middle}
For a large enough constant $c$,
\[
\pr{T > c n \log^2 n} \leq 2^{-2n} + \frac{1}{2^{\ell} \binom{n}{\ell} } \cdot \frac{1}{\poly(n)}.
\] 
\end{lemma}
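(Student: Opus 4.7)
The plan is to exploit the strong positive drift of the chain below $n/2$ and decompose the trajectory into doubling phases, each analyzed via Chernoff-type concentration. The key structural observation is that for $x \leq n/2$ one has $P(x, x+1)/P(x, x-1) = 3(n-x)/(x-1) \geq 3$, so conditional on a non-trivial move the chain moves upward with probability at least $3/4$, and the per-step probability of a non-trivial move is $r_x = \Theta(x/n)$.

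I would define stopping times $\tau_k = \min\{t : X_t \geq 2^k\}$ for integers $k$ from $k_0 := \lceil \log_2 \ell \rceil$ up to $k_{\max} := \lfloor \log_2(n/2)\rfloor$, so that $T = \tau_{k_{\max}}$, and analyze each increment $\tau_{k+1} - \tau_k$ separately using the strong Markov property. Allocate $T_k = Cn \log n$ steps to each phase for a sufficiently large absolute constant $C$. Two Chernoff bounds then control the phase: first, the number of non-trivial moves in $T_k$ steps (a sum of Bernoulli$(r_{X_t})$ variables with $r_{X_t} = \Omega(2^k/n)$ throughout the phase) is at least $C' 2^k \log n$ for a constant $C'$, except with failure probability $n^{-\Omega(2^k)}$; second, the net displacement after $C' 2^k \log n$ moves (each $+1$ with probability $\geq 3/4$) is at least $2^k$, except with failure probability $n^{-\Omega(2^k)}$---the factor $\log n$ in the number of moves is precisely what turns the naive $e^{-\Omega(\cdot)}$ bound into an $n^{-\Omega(\cdot)}$ bound.

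Summing $T_k$ over $k$ from $k_0$ to $k_{\max}$ gives total time $O(n \log^2 n)$, and a union bound yields total failure probability $\sum_{k \geq k_0} n^{-\Omega(2^k)}$, which is dominated by the initial phase contribution $n^{-\Omega(\ell)}$. Using the elementary inequality $2^\ell \binom{n}{\ell} \leq (2n)^\ell \leq n^{2\ell}$ for $\ell \geq 1$, this $n^{-\Omega(\ell)}$ is at most $\frac{1}{2^\ell \binom{n}{\ell} \poly(n)}$ as long as the constant in the Chernoff exponent is taken sufficiently large relative to the degree of the target polynomial. The $2^{-2n}$ term in the statement is essentially cosmetic: one checks that $\frac{1}{2^\ell \binom{n}{\ell}} \geq 2^{-2n}$ whenever $\ell \leq n$, so the second summand always dominates.

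The main obstacle is obtaining $n^{-\Omega(\ell)}$ decay rather than the weaker $e^{-\Omega(\ell)}$ that a naive phase analysis with $T_k = O(n)$ would produce: the latter would fall short of the required $\frac{1}{2^\ell \binom{n}{\ell} \poly(n)}$ as soon as $\ell$ grows beyond a constant. Inflating to $T_k = Cn \log n$ yields $\Omega(2^k \log n)$ moves per phase and boosts both Chernoff bounds to $n^{-\Omega(2^k)}$. A secondary technical point is justifying $r_{X_t} = \Omega(2^k / n)$ uniformly throughout phase $k$ despite the chain's ability to dip transiently below $2^k$; this can be handled either by coupling the first-passage time at level $2^{k+1}$ with that of a $\pm 1$ biased random walk (for which the analysis is clean), or by a Lyapunov-style argument showing that large excursions below $2^{k-1}$ are exponentially rare thanks to the positive drift.
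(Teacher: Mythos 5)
Your high-level intuition---that the positive drift below $n/2$ carries the walk to the reference point quickly, and that per-step move rates $\Theta(X_t/n)$ can be handled by concentration over an $O(n\log n)$ horizon per doubling phase---is reasonable. But there is a genuine gap in the place you label a ``secondary technical point,'' and it is in fact where the $\frac{1}{2^\ell\binom{n}{\ell}}$ factor has to come from.

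You observe correctly that $P(x,x+1)/P(x,x-1) = 3(n-x)/(x-1)$, but then only retain the crude bound $\geq 3$, i.e.\ ``conditional on moving the chain goes up with probability $\geq 3/4$.'' With this bound alone, the probability that the walk starting at $\ell$ ever visits level $\ell/2$ (say) before hitting $n/2$ is about $3^{-\ell/2}$, and more generally the probability of ever visiting level $m < \ell$ is only $e^{-\Theta(\ell-m)}$. These quantities are \emph{not} $n^{-\Omega(2^k)}$: for fixed $\ell$ they are constants, while the target is $\frac{1}{2^\ell\binom{n}{\ell}\poly(n)}$, which already for $\ell=1$ is $1/\poly(n)$. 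The crude-drift bound therefore cannot deliver what your per-phase bookkeeping needs, and the ``Lyapunov / coupling with a $\pm1$ biased walk'' you propose would only reproduce the same $e^{-\Theta(\cdot)}$ bound. Moreover, the deeper issue is that a low excursion does not fail the phase outright; it instead inflates the waiting times, because the self-loop probability at level $m$ is $1-\Theta(m/n)$. You need to weight the rare excursion events by the extra waiting time they incur, which is precisely what the paper does: it removes the self-loops (accelerated walk $\{Y_i\}$), bounds the probability $\pr{M\leq m}$ of the minimum ever reaching $m$ sharply via Lemma~\ref{lem:hitting-prob} (this is where the level-dependent ratio $\alpha_-=3(n-\ell)/(\ell-1)$ is actually used, giving $\pr{M\leq m}\leq (2n)^m/(2^\ell\binom{n}{\ell})$, exponentially better than $3^{-(\ell-m)}$), and then controls the sum of geometric waiting times $W_1+\cdots+W_S$ conditionally on $M=m$ by an MGF/Chernoff argument. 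The $\frac{1}{2^\ell\binom{n}{\ell}}$ factor is thus not a byproduct of Chernoff concentration of the progress; it is exactly the hitting-probability estimate for low-weight states.

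In short: the doubling-phase framework is a different organization from the paper's (embedded-chain-plus-waiting-time decomposition), and the arithmetic step ``$n^{-\Omega(\ell)}$ suffices'' is correct, but the argument as given does not achieve $n^{-\Omega(\ell)}$. To repair it you would need to replace the blanket $3/4$-drift bound with the level-dependent one, track the minimum level visited together with the waiting-time penalty it incurs, and combine the two---at which point you are essentially reconstructing the paper's Lemmas~\ref{lem:hitting-prob} and~\ref{lem:bound-waiting-time}.
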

\begin{proof}
To prove this result, we start by defining an accelerated walk $\{Y_i\}$ as in \cite{HL09} and the corresponding stopping time $S = \min \{s : Y_s \geq n/2\}$. More formally, let $N_0 = 0$ and $N_{i+1} = \min \{ k \geq N_i : X_k \neq X_{N_i}\}$ and then $Y_{i} = X_{N_i}$. It is not hard to see that $\{Y_i\}$ is a Markov chain and the transition probabilities are given by the transition probabilities for $\{X_k\}$ conditioned on moving.

We also define the waiting time $W_i = N_{i+1} - N_{i} - 1$ to be the number of times the self-loop edge is taken. Conditioned on $Y_i$, $W_i$ has a geometric distribution with parameter $\frac{2Y_i(3n-2Y_i-1)}{5 n (n-1)}$. Notice that this distribution is stochastically dominated by a geometric distribution with parameter $\frac{2Y_i}{5n}$, which we will use instead (we are only interested in upper bounds on the waiting times).

Getting back to $T$, notice that $T = S+W_1 + W_2 + \dots + W_S$. So we have for all $s$
\begin{equation}
\label{eq:t-s}
\pr{T > t + s} \leq \pr{S > s} + \pr{S \leq s, W_1 + \dots + W_S > t}.
\end{equation}
We will choose $s$ later so that both terms are small.

\begin{lemma}
\label{lem:bound-prob-s}
For any $s \geq 2n$, we have
\[
\pr{S > s} \leq \exp{-s/8}.
\]
\end{lemma}
\begin{proof}
For this we just use a concentration bound on the position of a random walk relative to its expectation. First we define a random walk $Y'_i$ with $Y'_0 = 0$ and it moves to the right with probability $3/4$ and to the left with probability $1/4$. Observe that the probability of moving right is at most $3/4$ for $Y_i$ provided $Y_i \leq n/2$. For this reason, before $S$, we can assume that $Y'_i \leq Y_i$. In other words, we have $S' \geq S$ where $S' = \min \{i : Y'_i \geq n/2\}$.
Thus,
\begin{align*}
\pr{S > s} &\leq \pr{S' > s} \\
			&\leq \pr{Y'_{s} < n/2} \\
			&= \pr{Y'_{s} < \ell + s/2 - (s/2+\ell-n/2)} \\
			&\leq \exp{-\frac{(s/2+\ell-n)^2}{2s} } \\
			&\leq \exp{-s/8}
\end{align*}
where we used the fact that $\ex{Y'_s} = \ell + s/2$ and a Chernoff-type bound, see for example \cite[Lemma A.4]{HL09}.
\end{proof}

We now move to the second step of the proof where we analyze the waiting times $W_1 + \dots + W_S$. Recall this is the total waiting time before the node $r = n/2$ is reached.
\begin{lemma}
\label{lem:bound-waiting-time}
We have
\[
\pr{S \leq s, W_1 + \dots + W_S > c n \log^2 n} \leq \frac{1}{2^{\ell}\binom{n}{\ell}} \cdot\frac{1}{\poly(n)} 
\]
\end{lemma}
\comment{What we really want here is to get a bound of $3^{-\ell} \binom{n}{\ell}^{-1}$ and for the probability of going all the way to 3n/4, but I don't know how to do that except when $\ell$ is small say $\sqrt{n}$.}
\begin{proof}
The techniques we use are similar to the techniques in \cite{HL09}, but we need to improve the analysis in several places. We try to use similar notation as \cite{HL09} as much as possible.

As in the proof of \cite[Lemma A.11]{HL09}, we start by defining the good event 
\[
\eventfont{H} = \bigcap_{x=1}^n \event{\sum_{k=1}^S \1(Y_k \leq x) \leq \gamma x/\mu},
\]
where $\mu = 1/2$.\footnote{We use this notation to apply \cite[Lemma A.5]{HL09} later.  $\mu$ corresponds to the probability of going forward minus the probability of going backward for a simplified walk that moves forward at most as fast as $Y_k$. In our case, we have $\mu = 1/2$ because we stop after reaching state $r=n/2$, and the probability of moving forward at $n/2$ is $3/4$.} The parameter $\gamma$ is going to be chosen later. This event is saying that states with small labels are not visited too many times. \comment{Why 1/2? Actually we don't choose $\mu$, it is the difference between the probabilities of moving forward and backward.} Later in the proof, we will show that the $\pr{\eventfont{H}^c}$ is small.
%
%
Define the random variable $M = \min_{1 \leq i \leq S} Y_i$. We have
\begin{align}
\pr{W_1 + \dots W_S > t, S \leq s, \eventfont{H}} &= \sum_{m=1}^{\ell} \pr{M=m, S \leq s, W_1 + \dots W_S > t, \eventfont{H}} \notag \\
				&= \sum_{m=1}^{\ell} \pr{M=m} \pr{S \leq s, W_1 + \dots W_S > t, \eventfont{H} | M=m} \notag \\
				&\leq \sum_{m=1}^{\ell} \pr{M \leq m}\max_{ \{y_i\} \text{ satisfying } M=m \text{ and } \eventfont{H} \text{ and } S \leq s} \pr{ W(y_1) + \dots + W(y_s) \geq t }, \label{eq:decomp-min}
\end{align}
where the maximum is taken over all sequences $y_1, \dots, y_s$ of possible walks and $W(y)$ is the waiting time at state $y$.

We will bound $\pr{M \leq m}$ using Lemma \ref{lem:hitting-prob}. Our random walk starts at position $\ell$ so that, in the notation of Lemma \ref{lem:hitting-prob}, $p_- = \frac{6 \ell (n - \ell)}{6 \ell (n-\ell) + 2 \ell(\ell-1)}$ and for $k \geq \ell+1$, $p_+(k) = \frac{6 k (n - k)}{6 k (n-k) + 2 (\ell + 1)\ell}$. As a result, we have
\[
\alpha_{-} = \frac{6 \ell(n-\ell)}{2\ell(\ell-1)} = 3 \cdot \frac{n-\ell}{\ell - 1}.
\]

As we stop after reaching the reference point $r = n/2$, we can bound $p_+ \geq 3/4$. As a result, we have
\begin{align*}
\pr{M \leq \ell - 1} &\leq \frac{1}{1 + 3 \cdot \frac{n-\ell}{\ell - 1} \left(1 - 1/3\right)} 
\\
&= \frac{1}{1 + 2 \cdot \frac{n-\ell}{\ell - 1}} \\
&\leq \frac{1}{2} \cdot \frac{\ell - 1}{n - \ell}.
\end{align*}
Reaching $\ell-2$ before $r$ means reaching $\ell-1$ before $r$ starting at $\ell$ and reaching $\ell-2$ before $r$ starting at $\ell-1$, and these parts of the walk are independent. As a result, by induction, we can then see that 
\begin{align}
\pr{M \leq m} &\leq \frac{1}{2^{\ell - m}} \cdot \frac{(\ell-1) (\ell - 2) \cdots m}{(n-\ell) (n-\ell+1) \cdots (n-m-1)} \notag \\
		&= \frac{1}{2^{\ell}} \cdot \frac{\ell !}{n (n-1) \cdots (n-\ell+1)} \cdot \frac{2^m}{\ell (n-\ell)} \cdot \frac{n (n-1) \cdot (n-m)}{(m-1)!} \notag \\
		&\leq \frac{1}{2^{\ell} \binom{n}{\ell}} \cdot (2n)^m. \label{eq:bound-prob-m}
\end{align}


We now look at the term $\max_{ \{y_i\} \text{ satisfying } M=m \text{ and } \eventfont{H} \text{ and } S \leq s} \pr{ W(y_1) + \dots + W(y_s) \geq t }$. As argued in the proof of \cite[Lemma A.11]{HL09}, the maximum is achieved when we make the walk visit as many times as possible the states with smaller labels. This means state $m$ is visited $\gamma m/\mu$ times, and all $i > m$ are visited $\gamma/\mu$ times. So we can write 
\[
W(y_1) + \dots + W(y_s) \leq \sum_{i=1}^{\gamma m/\mu} G_{m,i} + \sum_{i=1}^{\gamma/\mu} \sum_{k=m+1}^{n/2} G_{k, i},
\]
where $G_{k,i}$ has a geometric distribution with parameter $2k/5n$ and the random variables $\{G_{k,i}\}$ are independent.
We are going to give upper tail bounds on the right hand side by computing the moment generating function. For any $\lambda \geq 0$, we have, using the moment generating function of a geometric distribution and the independence of the random variables:
\begin{align*}
\ex{\exp{\lambda\left(\sum_{i=1}^{\gamma m/\mu} G_{m, i} + \sum_{i=1}^{\gamma/\mu} \sum_{k=m+1}^{n/2} G_{k, i}\right)}} &= \left( \frac{ 2m/5n }{ e^{-\lambda} - 1 + 2m/5n} \right)^{\gamma m / 2} \prod^{n/2}_{k=m+1} \left( \frac{2k/5n}{e^{-\lambda} - 1 +2k/5n} \right)^{\gamma/\mu}.
\end{align*}
Now take $\lambda$ so that $e^{\lambda} = \frac{1}{1-m/(5n)}$. This leads to
\begin{align*}
\ex{\exp{\lambda\left(\sum_{i=1}^{\gamma m/\mu} G_{m, i} + \sum_{i=1}^{\gamma/\mu} \sum_{k=m+1}^{n/2} G_{k, i}\right)}} 
&= \left(\frac{2m}{2m - m}\right)^{\gamma m/\mu} \cdot \prod^{n/2}_{k=m+1} \left( \frac{2k}{2k - m} \right)^{\gamma/\mu} \\
&\leq 2^{\gamma m/\mu} \left(\prod_{k=m+1}^{n/2} e^{\frac{m/2}{k-m/2}} \right)^{\gamma/\mu} \\
&\leq 2^{\gamma m/\mu} \left(e^{m/2 \cdot \ln n} \right)^{\gamma/\mu}.
\end{align*}
As a result, using Markov's inequality, we obtain
\begin{align*}
\pr{\sum_{i=1}^{\gamma m/\mu} G_{m,i} + \sum_{i=1}^{\gamma/\mu} \sum_{k=m+1}^{n/2} G_{k, i} > t}
&= \pr{\exp{\lambda\left( \sum_{i=1}^{\gamma m/\mu} G_{m,i} + \sum_{i=1}^{\gamma/\mu} \sum_{k=m+1}^{n/2} G_{k, i} \right) } > e^{\lambda t}}  \\
&\leq 2^{\gamma m/\mu} e^{\gamma m/(2\mu) \cdot \ln n} \cdot (1-m/5n)^{t} \\
&\leq 2^{\gamma m/\mu} e^{\gamma m/(2\mu) \cdot \ln n} \cdot e^{-tm/(5n)}.
\end{align*}
Getting back to equation \eqref{eq:decomp-min}, we have
\begin{align*}
\pr{W_1 + \dots W_S > t, S \leq s, \eventfont{H}} 
&\leq \frac{1}{2^{\ell} \binom{n}{\ell} } \sum_{m=1}^{\ell} (2n)^m  2^{\gamma m/\mu} e^{\gamma m/(2\mu) \cdot \ln n} \cdot e^{-tm/(5n)}  \\
&= \frac{1}{2^{\ell} \binom{n}{\ell} } \sum_{m=1}^{\ell} \left( 2n 2^{\gamma/\mu} e^{\gamma/(2\mu) \cdot \ln n} \cdot e^{-t/(5n)} \right)^m
\end{align*}
Thus, for $t > c n \log^2 n$ with sufficiently large $c$, this probability is bounded by $O\left(\frac{2^{-\ell}}{\binom{n}{\ell}\poly(n)}\right)$. \comment{Actually, we could even do $2^{c \log^2 n}$ instead of $\poly(n)$.}
\comment{ If it were only for this part, we would just take $\gamma$ to be a constant, but the problem is that what we are interested in is.}

It now remains to bound $\pr{\eventfont{H}^c, S \leq s}$. Fix $x \in \{1, \dots, n\}$, we have
\begin{align*}
\pr{\sum_{k=1}^S \1(Y_k \leq x) > \gamma x/\mu, S \leq s}
&\leq \sum_{j=1}^s \pr{Y_j = x, \event{\forall i < j, Y_i > x}, j < S, \sum_{k=1}^S \1(Y_k \leq x) > \gamma x/\mu  } \\
&\leq \sum_{j=1}^s \pr{Y_j = x, \event{\forall i < j, Y_i > x}, j < S, \sum_{k=j+1}^{S_j} \1(Y_k \leq x) \geq \gamma x/\mu  } \\
&\leq \sum_{j=1}^s \pr{M \leq x} \cdot \pr{\sum_{k=j+1}^{S_j} \1(Y_k \leq x) \geq \gamma x/\mu  |Y_j = x, j < S},
\end{align*}
where we defined $S_j = \min\{ s \geq j + 1 : Y_k \geq n/2\}$.
To obtain the last inequality, we simply used the fact that $\event{Y_j = x, j < S} \subseteq \event{M \leq x}$. Moreover, $\event{j < S}$ can be determined by looking at $Y_1, \dots, Y_j$ and thus conditioned on $\event{Y_j=x}$, $Y_k$ for $k \geq j+1$ and also $S_j$ are independent of $\event{j < S}$. This means that we can drop $\event{j < S}$ from the conditioning.
 
To bound $\pr{M \leq x}$, we use \eqref{eq:bound-prob-m}. We can also bound $Y_k$ by a simpler random walk $Y'_k$ that moves forward with probability $3/4$, as we did in the proof of Lemma \ref{lem:bound-prob-s}. Thus, we obtain
\begin{align*}
\pr{\sum_{k=1}^S \1(Y_k \leq x) > \gamma x/\mu, S \leq s}
&\leq \frac{1}{2^{\ell} \binom{n}{\ell}} (2n)^x \cdot s \cdot \pr{\sum_{k=1}^{\infty} \1(Y'_k \leq x) \geq \gamma x/\mu | Y'_0 = 0} \\
&\leq \frac{1}{2^{\ell} \binom{n}{\ell}} (2n)^x \cdot s \cdot 2 \exp{-\frac{\mu (\gamma - 2) x}{2}},
\end{align*}
where we used \cite[Lemma A.5]{HL09}.
As a result, by a union bound,
\begin{align*}
\pr{\eventfont{H}^c, S \leq s} &\leq \frac{1}{2^{\ell} \binom{n}{\ell} } \cdot 2s \cdot \sum_{x=1}^n \exp{x\left(\log(2n) - \frac{\mu (\gamma - 2)}{2}\right)} \\
&\leq \frac{1}{2^{\ell} \binom{n}{\ell} } \cdot \frac{1}{\poly n},
\end{align*}
where to get the last inequality, we choose $\gamma = c' \log n$ for large enough $c'$ and use the fact that  $s$ will be chosen linear in $n$.
Continuing, we reach
\begin{align*}
\pr{W_1 + \dots W_S > t, S \leq s} 
&\leq \pr{W_1 + \dots W_S > t, S \leq s, \eventfont{H}} + \pr{\eventfont{H}^c, S \leq s} \\
&\leq \frac{1}{2^{\ell} \binom{n}{\ell}} \frac{1}{\poly(n)}.
\end{align*}
\end{proof}
To complete the proof of Lemma \ref{lem:time-reach-middle}, we just plug the bounds obtained from Lemma \ref{lem:bound-prob-s} with $s=16n$ and from Lemma \ref{lem:bound-waiting-time} into equation \eqref{eq:t-s}.
\end{proof}

\section{An additional lemma}
Consider a random walk on a line indexed from $-1$ to $a$. At positions $i > 0$, the probability of moving to the right is $p_{+}(i)$ (depending on $i$ and for points $i \leq 0$, the probability of moving to the right is $p_{-}$. The following lemma gives a bound on the probability of hitting the node $-1$ before hitting $a$ when starting at position $0$. In our setting, we are interested in the case where $p_{-}$ and $p_{+}$ are (significantly) larger than $1/2$ so that the probability of hitting $-1$ before $a$ is small.

\begin{lemma}
\label{lem:hitting-prob}
Assume $p_+(i), p_{-} > 1/2$. Then the probability of hitting $-1$ before $a$ is exactly
\[
\frac{1}{1+ \alpha_{-} \cdot \frac{\prod_{j=1}^{a-1} \alpha_+(j)}{1 + \sum_{i=1}^{a-1} \prod_{j=i}^{a-1} \alpha_+(j)} } \ ,
\] 
where $\alpha_+(i) = \frac{p_{+}(i)}{1-p_+(i)}$ and $\alpha_- = \frac{p_-}{1-p_-}$. In particular, if $\alpha_+(i) = \alpha_+$ for all $i$, this probability becomes
\[
\frac{1}{1+ \alpha_{-} \cdot \frac{\alpha_+^{a} - \alpha_+^{a-1}}{\alpha_+^{a} - 1}  } \leq \frac{1}{1+ \alpha_{-} \cdot (1-1/\alpha_+)  } \ .
\] 

\end{lemma}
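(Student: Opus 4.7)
The plan is to treat this as a standard gambler's ruin problem with position-dependent probabilities and solve for the absorption probability by the method of first differences.

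Let $h(i)$ denote the probability of hitting $-1$ before $a$ when starting at position $i \in \{-1,0,1,\dots,a\}$, with boundary values $h(-1)=1$ and $h(a)=0$. For $1 \le i \le a-1$ the one-step decomposition gives
\[
h(i) = p_+(i)\, h(i+1) + (1-p_+(i))\, h(i-1),
\]
and at $i=0$, which uses the alternate probability, $h(0) = p_- h(1) + (1-p_-) h(-1)$. Introduce the forward differences $d_i \eqdef h(i) - h(i+1)$ for $i = -1,0,\dots,a-1$. The interior recurrence rearranges as $p_+(i) d_i = (1-p_+(i)) d_{i-1}$, i.e.\ $d_i = d_{i-1}/\alpha_+(i)$ for $i \ge 1$, and the boundary equation at $0$ gives $d_0 = d_{-1}/\alpha_-$. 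Iterating,
\[
d_i = \frac{d_{-1}}{\alpha_- \prod_{j=1}^{i} \alpha_+(j)} \quad \text{for } i \ge 0.
\]

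Next I would use the telescoping identity $\sum_{i=-1}^{a-1} d_i = h(-1)-h(a) = 1$ to solve for $d_{-1}$. This yields
\[
d_{-1} \cdot \Bigl( 1 + \tfrac{1}{\alpha_-} A \Bigr) = 1, \qquad A \eqdef \sum_{i=0}^{a-1} \frac{1}{\prod_{j=1}^{i} \alpha_+(j)},
\]
with the convention that the empty product ($i=0$) equals $1$. Since $h(0) = h(-1) - d_{-1} = 1 - d_{-1}$, a short manipulation gives
\[
h(0) = \frac{A/\alpha_-}{1+A/\alpha_-} = \frac{1}{1+\alpha_-/A}.
\]
The remaining step is purely algebraic: multiply numerator and denominator of $1/A$ by $\prod_{j=1}^{a-1}\alpha_+(j)$ and re-index the sum (sending $i\mapsto i+1$ and absorbing the $i=a$ empty-product term) to rewrite $1/A$ as $\prod_{j=1}^{a-1}\alpha_+(j) / \bigl(1+\sum_{i=1}^{a-1}\prod_{j=i}^{a-1}\alpha_+(j)\bigr)$, giving the stated closed form.

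For the uniform case $\alpha_+(i) \equiv \alpha_+$, the denominator becomes a finite geometric series $1+\alpha_++\cdots+\alpha_+^{a-1} = (\alpha_+^a-1)/(\alpha_+-1)$, and the numerator is $\alpha_+^{a-1}$; substituting gives the ratio $(\alpha_+^a-\alpha_+^{a-1})/(\alpha_+^a-1)$ claimed. The final inequality follows from dropping the $-1$ in the denominator and simplifying to $1-1/\alpha_+$. There is no real obstacle here; the only care needed is in the index bookkeeping when reconciling my natural form $\sum 1/\prod$ with the author's form $\sum \prod$, and in checking empty-product conventions at the endpoints.
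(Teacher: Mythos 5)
Your proof is correct and uses essentially the same method as the paper: write the one-step recurrence for the hitting probability, pass to successive differences, and telescope between the two boundary conditions. The only cosmetic difference is that you anchor the differences at the $-1$ side (expressing everything in terms of $d_{-1}$) whereas the paper anchors at the $a$ side (expressing everything in terms of $P_{a-1}$), which is why you need the extra re-indexing step to match the stated closed form; both routes are equivalent.
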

\begin{proof}
Let $P_i$ be the probability of first reaching $-1$ when starting at position $i$. We can write for any 
for $i \in [1, a-1]$, $P_{i} = p_{+}(i) P_{i+1} + (1-p_{+}(i)) P_{i-1}$, which can be re-written as
\[
\frac{p_{+}(i)}{1-p_{+}(i)} \left(P_{i} - P_{i+1} \right) =  \left(P_{i-1} - P_i \right).
\]
We now use the boundary condition at node $a$: $P_a = 0$. Thus, $ \left(P_{a-2} - P_{a-1} \right) =  \frac{p_{+}(a-1)}{1-p_{+}(a-1)} P_{a-1}$. Moreover, we see by induction that for any $i \geq 1$, $P_{i-1} - P_i = \left( \prod_{j=i}^{a-1}\frac{p_+(j)}{1-p_+(j)}\right) P_{a-1}$. We can now write a telescoping sum
\[
P_{0} - P_{a-1} = \sum_{i=1}^{a-1} P_{i-1} - P_{i} = \sum_{i=1}^{a-1} \prod_{j=i}^{a-1}\alpha_+(j) \cdot P_{a-1}.
\]
As a result, 
\[
P_{0} = P_{a-1} \left( 1 + \sum_{i=1}^{a-1} \prod_{j=i}^{a-1}\alpha_+(j). \right).
\]
We can then write $P_{-1} - P_{0} = \frac{p_-}{1-p_-} \left( P_{0} - P_{1} \right) = P_{a-1} \cdot \prod_{j=1}^{a-1}\alpha_+(j) \cdot \frac{p_-}{1-p_-}$. 

Now, we use our second boundary condition $P_{-1} = 1$. We have
\begin{align*}
1 = P_{-1} &= P_0 +  P_{a-1} \cdot \alpha_{-} \prod_{j=1}^{a-1}\alpha_+(j) \\
		&= P_0 \left(1 + \alpha_{-} \frac{\prod_{j=1}^{a-1}\alpha_+(j)}{\sum_{i=1}^{a-1} \prod_{j=i}^{a-1}\alpha_+(j)} \right),		
\end{align*}
which leads to the desired result.
\end{proof}

\bibliographystyle{alphaplus}
\bibliography{scrambling}

\end{document}